\documentclass[conference]{IEEEtran}
\IEEEoverridecommandlockouts

\usepackage{amsmath,amssymb,amsfonts}
\usepackage{amsthm}
\usepackage{bm}
\usepackage{cite}
\usepackage{graphicx}
\usepackage{mathtools}
\usepackage{url}
\usepackage{algorithm}
\usepackage{algpseudocode}
\usepackage{tikz}
\usepackage[hidelinks]{hyperref}
\usetikzlibrary{shapes.geometric,arrows.meta,positioning}
\usepackage{pgfplots}
\pgfplotsset{compat=1.17}
\definecolor{figblue}{RGB}{37,99,235}
\definecolor{figgreen}{RGB}{5,150,105}
\definecolor{figpurple}{RGB}{124,58,237}
\definecolor{figred}{RGB}{225,29,72}

\makeatletter
\def\bstctlcite#1{\@bsphack
  \@for\@citeb:=#1\do{%
    \edef\@citeb{\expandafter\@firstofone\@citeb}%
    \if@filesw\immediate\write\@auxout{\string\citation{\@citeb}}\fi}%
  \@esphack}
\makeatother

\newtheorem{fact}{Fact}
\newtheorem{proposition}{Proposition}
\newtheorem{corollary}{Corollary}
\newtheorem{lemma}{Lemma}
\newtheorem{remark}{Remark}

\title{\LARGE \bf
Excitation-Supervised Closed-Loop Self-Calibration and Target Seeking for an Unknown-Pose Range-Bearing Relay}

\author{Yash Bagla\\
Michigan State University\\
{\tt\small yashbagla321@gmail.com}%
\thanks{\raggedright Code and data: \protect\url{https://github.com/yashbagla321/excitation-supervised-closed-loop}, archived at \protect\url{https://doi.org/10.5281/zenodo.21892671}.}}

\begin{document}
\bstctlcite{IEEEexample:BSTcontrol}
\maketitle
\thispagestyle{empty}
\pagestyle{empty}
\begin{abstract}
A vehicle seeking a hidden target through a range-bearing relay of unknown position and yaw must decide, online, whether its own motion has already made the relay calibration trustworthy, and what to do when it has not.
Two distinct vehicle-relative observations are known to remove the calibration gauge and make the target's relay-local packet globally actionable~\cite{bagla2027identifiability}, but that identifiability statement is static: it classifies a stored trajectory window only after the fact.
This paper supplies the closed-loop layer: we show that the trajectory-spread margin $S_v$ that governs identifiability is simultaneously a finite-noise seed-accuracy bound, a local-vector variance decomposition, and a circle-geometry excitation budget, and we use it to supervise an excitation-reset controller.
An excitation-supervised algorithm retriggers exploratory motion whenever the spread certificate is insufficient, projecting the target-seeking input away from the excitation's push direction while it does so, and otherwise proceeds to unrestricted target seeking; under explicit sampling assumptions the supervision rule provably acquires any required excitation within a finite time, while, in the noiseless local regime with positive excitation decay, estimator convergence yields target-seeking convergence after certification, and the threshold itself is selected from a desired calibration-accuracy level rather than chosen heuristically.
Closed-loop simulation, paired Monte Carlo comparisons, a spread-threshold ablation, and a ROS~2/Gazebo software-in-the-loop experiment, in which the supervised loop steers a physics-integrated vehicle through real message transport and sensing delay, validate the approach.
A decay-rate sweep against a fixed decaying-excitation schedule shows that supervision matters when the schedule's decay outruns the unknown time-to-adequate-excitation: over 100 paired trials the fixed baseline's yaw RMSE rises from $0.010$ to $0.065$~rad and success falls to $56\%$, while target-tracking error remains insensitive; supervision keeps yaw RMSE between $0.0095$ and $0.0191$~rad with $100\%$ success.
\end{abstract}

\section{Introduction}

Vehicle motion can self-calibrate an unknown-pose range-bearing relay: two distinct vehicle-relative observations are enough to recover the relay's yaw, position, and a hidden target it reports only in its own local frame~\cite{bagla2027identifiability}.
That identifiability result is static in the sense that it classifies a finite trajectory window as sufficient or insufficient after the fact.
A vehicle running target-seeking control online instead needs to decide, at each measurement step, whether its stored window is already excited enough to trust the current estimate, or whether it should keep exploring before target seeking dominates.
This paper treats that supervision problem as a control-design question built on top of the identifiability result, rather than re-deriving identifiability itself.

The trajectory-spread margin $S_v$ that governs identifiability conditioning in~\cite{bagla2027identifiability} turns out to carry three additional roles once the estimator is placed inside a control loop: it bounds the constructive seed's sensitivity to measurement noise, it decomposes the target-estimate covariance into an averaging term and a calibration-propagation term, and, for circular excitation, it has a closed form in the excitation radius and packet count.
Together these give a controller a certifiable, checkable stopping rule for exploratory motion, rather than a fixed exploration schedule chosen in advance.

Persistency of excitation is the classical analogue in adaptive control: parameter estimates converge only when the regressor signal is spectrally rich enough~\cite{boyd1986necessary,narendra1987persistent}, and modern estimator designs such as dynamic regressor extension and mixing work to weaken exactly this requirement~\cite{aranovskiy2017drem}.
The excitation margin $S_v$ plays a similar role for relay self-calibration, but as a scalar richness condition a supervisory controller checks online from stored packets, rather than a spectral condition imposed on a reference signal offline.
Concurrent learning replaces the offline persistent-excitation requirement with a verifiable online condition on recorded data, using a similar motivation to the one behind Algorithm~\ref{alg:supervised}: monitor a checkable richness condition rather than assume it holds~\cite{chowdhary2013concurrent}.
The rotating exploratory envelope that Algorithm~\ref{alg:supervised} retriggers plays the same role as the periodic dither that extremum-seeking control injects to keep its gradient information alive~\cite{krstic2000stability}, and as the excitation used in adaptive source-localization station keeping~\cite{guler2017adaptive}.
Observability-aware trajectory optimization designs motion to maximize an observability metric for self-calibration states, typically for inertial or visual-inertial sensor suites~\cite{hausman2017observability}, including closed-loop navigation and online recalibration~\cite{preiss2018simultaneous}, while closed-form circular or spiral excitation has been proposed specifically for bearing-only target localization and sensor self-calibration~\cite{peng2024trajectory}; the excitation-budget corollary below is a closed-form analogue for the relay-frame gauge, in the tradition of optimal sensor--target localization geometries~\cite{bishop2010optimality,martinez2006optimal,zhou2011multirobot}.
More broadly, active SLAM and information-based exploration treat sensing and self-localization jointly through information-theoretic or optimal-experimental-design criteria~\cite{bourgault2002information,placed2023active}, and dual-control formulations make the underlying probe-versus-exploit tradeoff explicit in greater generality~\cite{mesbah2018stochastic}. Algorithm~\ref{alg:supervised} takes a complementary route: it replaces online trajectory optimization and matrix-valued information objectives with a relay-specific scalar certificate whose threshold is derived from calibration accuracy and whose acquisition under sampled actuation is proved below.

The contributions are:
\begin{enumerate}
\item a quantitative excitation theory showing that the spread margin $S_v$ simultaneously bounds constructive-seed error, decomposes target-estimate variance into averaging and calibration-propagation terms, and gives an explicit excitation budget for circular motion, together with an accuracy-driven rule selecting the spread threshold from a desired calibration-uncertainty level;
\item an excitation-supervised estimation and target-seeking algorithm that formalizes the informal reset rule of~\cite{bagla2027identifiability} into an explicit online supervision loop triggered by the spread certificate alone;
\item a finite excitation-acquisition proposition showing that, under explicit sampling assumptions, the supervision rule itself accumulates any required spread within an explicit finite time, with every hypothesis satisfied by the controller as implemented through an underexcited-phase projection of the seeking input, and a local recovery proposition showing that, in the noiseless model and for positive excitation decay, entry into the Gauss--Newton convergence basin yields target-estimate and target-seeking convergence;
\item a ROS~2/Gazebo software-in-the-loop experiment in which the supervised loop, running as a ROS~2 node, steers a vehicle integrated by the Gazebo physics engine through real message transport and configurable sensing delay;
\item a 100-trial paired Monte Carlo decay-rate sweep and a spread-threshold ablation that validate the accuracy-driven threshold rule as a population root-mean-square-error target, quantify its excitation cost, and expose its domain of validity, together with a target-seeking scenario showing that ordinary closed-loop motion can supply the needed excitation naturally, so that supervision matters most when such incidental excitation is absent.
\end{enumerate}

\section{Measurement Model and Identifiability Recap}
\label{sec:model}

We recap the single-beacon model of~\cite{bagla2027identifiability}: a hidden target $p\in\mathbb{R}^2$ is reported by a relay beacon (the two terms are used interchangeably) of unknown position $x\in\mathbb{R}^2$ and yaw $\psi$ (rotation $R=R(\psi)$), which returns, at each vehicle pose $q_k$, noisy local-frame range-bearing packets to the vehicle and to the target,
\[
\tilde\ell_{k}^{v}=\tilde r_{k}^{v}
\begin{bmatrix}\cos\tilde\beta_{k}^{v}\\ \sin\tilde\beta_{k}^{v}\end{bmatrix},\qquad
\tilde\ell_k^{t}=\tilde r_k^{t}
\begin{bmatrix}\cos\tilde\beta_k^{t}\\ \sin\tilde\beta_k^{t}\end{bmatrix},
\]
with noiseless vectors $\ell_{k}^{v}$ and $\ell^{t,*}$ satisfying
\begin{equation}
q_k=x+R\ell_{k}^{v},\qquad
p=x+R\ell^{t,*}. \label{eq:model}
\end{equation}
The state is $z=(p,x,\psi)$.
As in~\cite{bagla2027identifiability}, the estimator minimizes native range-bearing residuals $r_k(z)=y_k-g_k(z)$ (with $y_k=(\tilde r_k^v,\tilde\beta_k^v,\tilde r_k^t,\tilde\beta_k^t)$ the stacked measurements at pose $q_k$ and $g_k(z)$ their model predictions under \eqref{eq:model}), weighted by the range/bearing precision matrix $W=\mathrm{diag}(\sigma_r^{-2},\sigma_\theta^{-2},\sigma_r^{-2},\sigma_\theta^{-2})$, in the batch objective
\begin{equation}
J(z)=\frac{1}{2}\sum_{k=1}^{K} r_{k}(z)^TWr_{k}(z). \label{eq:cost}
\end{equation}

\begin{fact}[Motion-induced two-view recovery~\cite{bagla2027identifiability}]
\label{fact:constructive}
Consider noiseless measurements at two vehicle poses $q_a,q_b$.
If $\ell_a^v\neq\ell_b^v$, then $p$, $x$, and $\psi$ are uniquely determined modulo $2\pi$ yaw wrapping by
\[
\psi=\angle(q_b-q_a)-\angle(\ell_b^v-\ell_a^v),
\]
\[
x=q_a-R\ell_a^v,\qquad
p=x+R\ell^{t,*},
\]
and the Jacobian of the stacked residuals in \eqref{eq:cost} has rank five at the true state if and only if $\{\ell_k^v\}$ contains at least two distinct vectors.
A single vehicle pose leaves a one-dimensional yaw/translation/target gauge.
\end{fact}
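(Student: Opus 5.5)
The plan is to work directly from the noiseless model \eqref{eq:model}, treating the two-view recovery and the rank statement separately. For recovery, I subtract the vehicle equations at $q_a$ and $q_b$, which gives $q_b-q_a=R(\ell_b^v-\ell_a^v)$. Because $\ell_a^v\neq\ell_b^v$, the right-hand side is a nonzero vector. A planar rotation preserves norms and adds $\psi$ to the polar angle, so $\psi=\angle(q_b-q_a)-\angle(\ell_b^v-\ell_a^v)$ modulo $2\pi$, and this fixes $R$ uniquely. Then $x=q_a-R\ell_a^v$ follows from the first model equation at $q_a$, and $p=x+R\ell^{t,*}$ from the second. Uniqueness holds because every step is forced: any other solution $(p',x',\psi')$ must satisfy the same difference equation, so it has the same $\psi$ modulo $2\pi$ and hence the same $x$ and $p$. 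Noiseless range-bearing packets determine the local vectors exactly, since polar coordinates determine Cartesian ones, so it is legitimate to work with $\ell$ rather than $(r,\beta)$.

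For the rank claim, I first reduce from native residuals to Cartesian ones. At the true state the ranges are positive, so the map $(r,\beta)\mapsto r(\cos\beta,\sin\beta)$ is a local diffeomorphism. The Jacobian of $g_k$ in range-bearing coordinates therefore equals an invertible block-diagonal matrix times the Jacobian of the predicted local vectors $\hat\ell_k^v(z)=R^T(q_k-x)$ and $\hat\ell^t(z)=R^T(p-x)$. The rank is unchanged, and the weighting $W$ is positive definite, so it does not affect rank either. Differentiating with $S=\begin{bmatrix}0&-1\\1&0\end{bmatrix}$ and $\partial R^T/\partial\psi=-SR^T$ (up to sign convention), the block for pose $k$ is
\[
\begin{bmatrix} 0 & -R^T & -S\ell_k^v\\ R^T & -R^T & -S\ell^{t,*}\end{bmatrix}
\]
in the columns $(p,x,\psi)$.

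Next I compute the kernel. A direction $(\delta p,\delta x,\delta\psi)$ is in the kernel iff $R^T\delta x=-\delta\psi\,S\ell_k^v$ for every $k$ and $\delta p=\delta x+\delta\psi\,RS\ell^{t,*}$. If two $\ell_k^v$ differ, subtracting the first condition for two such $k$ gives $\delta\psi\,S(\ell_a^v-\ell_b^v)=0$. Since $S$ is invertible, $\delta\psi=0$, and then $\delta x=\delta p=0$. The kernel is trivial, so the rank is five, the full state dimension. Conversely, if all $\ell_k^v$ coincide, $\delta\psi=1$ with the induced $\delta x$ and $\delta p$ gives a nonzero kernel vector, so the rank is at most four. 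This also yields the final sentence: a single pose leaves exactly a one-dimensional kernel, the gauge of rotation about the vehicle position, which simultaneously shifts $x$ and $p$. The rank is exactly four because the remaining columns are independent, as seen from the $p$- and $x$-columns alone.

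The main obstacle is bookkeeping rather than an idea: the sign of the yaw derivative, and making the native-coordinate reduction rigorous, which requires nonzero ranges to both the vehicle and the target. I would state that nondegeneracy assumption explicitly, since bearing is undefined at zero range.
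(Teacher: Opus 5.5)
Your proof is correct. The difference equation $q_b-q_a=R(\ell_b^v-\ell_a^v)$ fixes $\psi$ modulo $2\pi$. The reduction from native range-bearing residuals to local-vector residuals is legitimate under the nonzero-range assumption you rightly make explicit; it rests on the same fact as the radial/tangential eigenvalues $\sigma_r^{-2}$ and $(r_k^v\sigma_\theta)^{-2}$ behind \eqref{eq:nativethreshold}. Your Jacobian blocks and kernel computation check out, and with $R=R(\psi)$ one has exactly $\partial R^T/\partial\psi=-SR^T$, so the sign hedge is unnecessary.

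The paper itself gives no proof of Fact~\ref{fact:constructive}; it imports it from~\cite{bagla2027identifiability}. The closest in-paper reasoning about the rank part is the Schur-complement route in Lemma~\ref{lem:spread} and the proof of Corollary~\ref{cor:variance}:
\begin{enumerate}
\item The target rows have an invertible $p$-block, so eliminating $p$ leaves them with zero information on $\zeta=(\delta x,\delta\psi)$.
\item Eliminating $x$ from the vehicle rows leaves a yaw Schur complement proportional to $S_v$, which is positive exactly when two local vectors differ.
\end{enumerate}
Your null-space argument is the qualitative reading of the same linear algebra. Your kernel condition $R^T\delta x=-\delta\psi\,S\ell_k^v$ for all $k$ is, up to an orthogonal factor, the system $\delta x+RS\ell_k^v\delta\psi=0$. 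Lemma~\ref{lem:spread} evaluates the least-squares residual of that system over $\delta x$ as $\delta\psi^2S_v$. Your route is shorter and gives the gauge direction explicitly. The Schur-complement route instead gives the quantitative margin $S_v$, which the rest of the paper needs for seed accuracy, variance decomposition, and threshold design.

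One strengthening is worth writing down for the last sentence of the Fact: the single-pose gauge is not merely infinitesimal. For every $\theta$, take yaw $\psi+\theta$, $x_\theta=q-R(\psi+\theta)\ell^v$, and $p_\theta=x_\theta+R(\psi+\theta)\ell^{t,*}$. This state reproduces every noiseless packet exactly. So your one-dimensional kernel integrates to a global one-parameter family of exact solutions, the rotation about the vehicle position you name, and that is what calling it a gauge asserts.
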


Fact~\ref{fact:constructive} is a static classification: a given window either does or does not contain two distinct local vehicle vectors.
The rest of this paper is about what a controller should do with that classification online.

\section{Quantitative Excitation Theory}
\label{sec:excitation}

\begin{lemma}[Finite-window excitation margin~\cite{bagla2027identifiability}]
\label{lem:spread}
Define the centered local vehicle-vector spread
\[
S_v=\sum_{k=1}^{K}\|\ell_k^v-\bar\ell^v\|^2,\qquad
\bar\ell^v=\frac{1}{K}\sum_{k=1}^{K}\ell_k^v .
\]
For any fixed yaw perturbation $\delta\psi$,
$\min_{\delta x}\sum_k\|\delta x+R S\ell_k^v\delta\psi\|^2=\delta\psi^2S_v$, where $S=\left[\begin{smallmatrix}0&-1\\1&0\end{smallmatrix}\right]$.
In the noiseless model, $S_v=\sum_k\|q_k-\bar q\|^2$.
\end{lemma}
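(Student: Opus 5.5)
The plan is to treat the minimization over $\delta x$ as an unconstrained least-squares problem in $\delta x\in\mathbb{R}^2$ with $\delta\psi$ held fixed. Set $a_k=RS\ell_k^v\delta\psi$, so the objective is $f(\delta x)=\sum_k\|\delta x+a_k\|^2$. This is a strictly convex quadratic with gradient $2\sum_k(\delta x+a_k)$, so its unique minimizer is
\[
\delta x^\star=-\frac{1}{K}\sum_k a_k=-RS\bar\ell^v\delta\psi .
\]
Substituting this back gives the centered sum
\[
f(\delta x^\star)=\sum_k\|a_k-\bar a\|^2=\delta\psi^2\sum_k\|RS(\ell_k^v-\bar\ell^v)\|^2 .
\]
Both $R$ and $S$ are orthogonal ($S$ is rotation by $\pi/2$), so $\|RSw\|=\|w\|$ for every $w$. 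The sum therefore collapses to $\delta\psi^2 S_v$. Equivalently, I could use the parallel-axis identity
\[
\sum_k\|\delta x+a_k\|^2=K\|\delta x+\bar a\|^2+\sum_k\|a_k-\bar a\|^2,
\]
which gives both the minimizer and the minimum value in one line. I would present it that way because it avoids differentiating.

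For the noiseless identity I will use the model \eqref{eq:model}, namely $q_k=x+R\ell_k^v$. Averaging over $k$ gives $\bar q=x+R\bar\ell^v$, so $q_k-\bar q=R(\ell_k^v-\bar\ell^v)$. Since $R$ is orthogonal, each term satisfies $\|q_k-\bar q\|^2=\|\ell_k^v-\bar\ell^v\|^2$, and summing over $k$ yields $S_v=\sum_k\|q_k-\bar q\|^2$.

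There is no real obstacle here. The only step that needs care is keeping track of where $\delta\psi$ enters: it is a scalar, so it factors out of the norm as $\delta\psi^2$. It also helps to note that the statement holds for any fixed $\delta\psi$, including $\delta\psi=0$, where both sides are zero. I would also remark that this is exactly why $S_v>0$ if and only if the window contains two distinct $\ell_k^v$, which ties back to the rank condition in Fact~\ref{fact:constructive}.
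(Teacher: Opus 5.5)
Your proof is correct. It takes essentially the same approach the paper relies on, although the paper gives no proof of its own and imports the lemma from~\cite{bagla2027identifiability}. Your parallel-axis (completing-the-square) step and the identity $q_k-\bar q=R(\ell_k^v-\bar\ell^v)$ are exactly what the paper uses implicitly later: the yaw Schur complement $\sum_k\|\ell_k^v\|^2-K\|\bar\ell^v\|^2=S_v$ in Corollary~\ref{cor:variance}, and the characterization $S_v=\min_c\sum_k\|q_k-c\|^2$ in Proposition~\ref{prop:acquisition}.
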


The identifiability analysis of~\cite{bagla2027identifiability} uses $S_v$ to explain estimator conditioning and states an informal retrigger rule on it, without quantifying what a given spread level buys.
The following results supply that quantification: $S_v$ certifies, before the estimate is trusted, how much a controller should expect the constructive seed and the closed-loop estimate to be worth.

\begin{proposition}[Constructive-seed accuracy]
\label{prop:seed}
Let the two selected vehicle packets satisfy $\|\tilde\ell_j^v-\ell_j^v\|\le\epsilon$, $j\in\{a,b\}$, and let each of the $K$ target packets satisfy $\|\tilde\ell_k^t-\ell^{t,*}\|\le\epsilon$.
Define $\bar{\tilde\ell}^{\,t}=K^{-1}\sum_k\tilde\ell_k^t$, let $d_v=\|\ell_b^v-\ell_a^v\|$ with $4\epsilon\le d_v$, and form the target seed as $\hat p=\hat x+R(\hat\psi)\bar{\tilde\ell}^{\,t}$.
Then the resulting constructive estimates satisfy
\[
|\hat\psi-\psi|\le\frac{2\pi\epsilon}{d_v},\quad
\|\hat x-x\|\le\epsilon\!\left(1+\frac{2\pi r_a}{d_v}\right),
\]
\[
\|\hat p-p\|\le\epsilon\!\left(2+\frac{2\pi(r_a+r_t)}{d_v}\right),
\]
where $r_a=\|\ell_a^v\|$ and $r_t=\|\ell^{t,*}\|$.
Since $d_v^2=2S_v$ for $K=2$, seed error scales as measurement noise divided by the square root of trajectory spread.
\end{proposition}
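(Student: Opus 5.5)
The plan is to write each constructive estimate as a perturbation of the noiseless formula in Fact~\ref{fact:constructive}. The vehicle poses $q_a,q_b$ are treated as known exactly. The noisy estimates are
\[
\hat\psi=\angle(q_b-q_a)-\angle(\tilde\ell_b^v-\tilde\ell_a^v),\qquad
\hat x=q_a-R(\hat\psi)\tilde\ell_a^v,
\]
with $\hat p$ as in the statement. The bounds then come in the order yaw, position, target, each one reusing the previous.

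\emph{Yaw.} Set $u=\ell_b^v-\ell_a^v$ and $\tilde u=\tilde\ell_b^v-\tilde\ell_a^v$. The noise assumption gives $\|\tilde u-u\|\le2\epsilon\le d_v/2$. Since $\angle(q_b-q_a)$ is exact, $\hat\psi-\psi$ equals, up to sign, the angle $\theta$ between $u$ and $\tilde u$.
\begin{itemize}
\item The perpendicular distance from the tip of $u$ to the line spanned by $\tilde u$ is $\|u\|\sin|\theta|$, and it is at most $\|u-\tilde u\|$. Hence $\sin|\theta|\le 2\epsilon/d_v\le 1/2$.
\item Because $\|\tilde u-u\|<\|u\|$, $\tilde u$ lies in the open half-plane $\{w:\langle w,u\rangle>0\}$. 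So $|\theta|<\pi/2$, and the condition $\sin|\theta|\le 1/2$ pins $|\theta|\le\pi/6$ rather than the obtuse branch.
\item Concavity of $\sin$ on $[0,\pi/2]$ gives $|\theta|\le(\pi/2)\sin|\theta|\le\pi\epsilon/d_v$, which is within the stated $2\pi\epsilon/d_v$.
\end{itemize}
This is the step I expect to need the most care. The condition $4\epsilon\le d_v$ is used exactly here, to exclude a large or wrapped angular error, and I would state the half-plane argument explicitly.

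\emph{Position.} Write
\[
\hat x-x=(R(\psi)-R(\hat\psi))\ell_a^v+R(\hat\psi)(\ell_a^v-\tilde\ell_a^v).
\]
Use the identity $\|R(\psi)-R(\hat\psi)\|=2|\sin(\delta\psi/2)|\le|\delta\psi|$ and the fact that rotations are isometries. This gives $\|\hat x-x\|\le r_a|\delta\psi|+\epsilon$. Substituting the yaw bound yields $\epsilon(1+2\pi r_a/d_v)$.

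\emph{Target.} Write
\[
\hat p-p=(\hat x-x)+(R(\hat\psi)-R(\psi))\ell^{t,*}+R(\hat\psi)(\bar{\tilde\ell}^{\,t}-\ell^{t,*}).
\]
The averaged packet satisfies $\|\bar{\tilde\ell}^{\,t}-\ell^{t,*}\|\le\epsilon$ by the triangle inequality applied to the $K$ packet errors. The middle term is bounded by $r_t|\delta\psi|$. Summing the three terms gives $\epsilon(2+2\pi(r_a+r_t)/d_v)$.

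\emph{Spread relation.} For $K=2$, $\bar\ell^v$ is the midpoint of $\ell_a^v$ and $\ell_b^v$, so each centered deviation has norm $d_v/2$. Then $S_v=2(d_v/2)^2=d_v^2/2$, i.e. $d_v^2=2S_v$. Substituting $d_v=\sqrt{2S_v}$ into the three bounds shows that every error term scales as $\epsilon/\sqrt{S_v}$, which is the closing claim.
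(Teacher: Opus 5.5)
Your proposal is correct and follows essentially the same route as the paper's proof. Both reduce the yaw error to the angle between the true and noisy difference vectors, use the positive inner product to confine it to the acute branch, and apply $\theta\le(\pi/2)\sin\theta$; both then propagate through $x=q_a-R(\psi)\ell_a^v$ and $p=x+R(\psi)\ell^{t,*}$ by the triangle inequality with $\|R(\hat\psi)-R(\psi)\|\le|\hat\psi-\psi|$. The one difference is your distance-to-line step, which gives $\sin\theta\le 2\epsilon/d_v$ rather than the paper's $2\epsilon/(d_v-2\epsilon)\le 4\epsilon/d_v$, so you obtain the sharper yaw bound $\pi\epsilon/d_v$, which implies the stated one.
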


The worst-case bound does not improve with $K$, whereas under zero-mean noise the benefit of averaging appears through the $\sigma^2/K$ term in Corollary~\ref{cor:variance}.

\begin{proof}
Write $v=\ell_b^v-\ell_a^v$ and $\tilde v=v+w$ with $\|w\|\le2\epsilon$, so the yaw error $|\hat\psi-\psi|$ equals the angle $\alpha$ between $v$ and $\tilde v$; since $\|w\|\le2\epsilon\le d_v/2<\|v\|$, the inner product $v^T\tilde v$ is positive and $\alpha\in[0,\pi/2)$.
Bounding $\sin\alpha$ and then $\alpha$ itself gives
\begin{align}
\sin\alpha&=\frac{|v\times w|}{\|v\|\|\tilde v\|}\le\frac{2\epsilon}{d_v-2\epsilon}\le\frac{4\epsilon}{d_v}, \label{eq:sinalpha}\\
|\hat\psi-\psi|=\alpha&\le\frac{\pi}{2}\sin\alpha\le\frac{2\pi\epsilon}{d_v}, \label{eq:alphabound}
\end{align}
where \eqref{eq:sinalpha} uses $d_v\ge4\epsilon$ so $d_v-2\epsilon\ge d_v/2$, and \eqref{eq:alphabound} uses $\alpha\le(\pi/2)\sin\alpha$ on $[0,\pi/2]$.
Propagating \eqref{eq:alphabound} through $x=q_a-R(\psi)\ell_a^v$ gives
\begin{align}
\|\hat x-x\|&\le\underbrace{\|\tilde\ell_a^v-\ell_a^v\|}_{\le\,\epsilon}+\underbrace{\|(R(\hat\psi)-R(\psi))\ell_a^v\|}_{\le\,|\hat\psi-\psi|\,r_a} \nonumber\\
&\le\epsilon\Big(1+\frac{2\pi r_a}{d_v}\Big). \label{eq:xbound}
\end{align}
Because $\|\bar{\tilde\ell}^{\,t}-\ell^{t,*}\|\le K^{-1}\sum_k\|\tilde\ell_k^t-\ell^{t,*}\|\le\epsilon$, the same two-term split applied to $p=x+R(\psi)\ell^{t,*}$, using \eqref{eq:xbound} for the $x$ contribution and $r_t=\|\ell^{t,*}\|$ for the rotation contribution, gives the stated bound on $\|\hat p-p\|$.
\end{proof}

\begin{corollary}[Local-vector variance decomposition]
\label{cor:variance}
Consider the linearized local-vector model at the true one-beacon state with i.i.d. isotropic noise of variance $\sigma^2$ per axis on each vehicle and target packet, $K$ synchronized packets, and a flat prior on $z=(p,x,\psi)$.
Marginalizing $p$ removes the target rows' Fisher information about $(x,\psi)$; the information on $(x,\psi)$ comes from the vehicle packets alone, and the yaw Schur complement equals $S_v/\sigma^2$.
Consequently $\mathrm{var}(\hat\psi)=\sigma^2/S_v$ in the linearized model, and
\[
\mathrm{cov}(\hat p)=\frac{\sigma^2}{K}I+B F_\zeta^{-1}B^T,\qquad
B=\begin{bmatrix}I&RS\ell^{t,*}\end{bmatrix},
\]
where $F_\zeta$ is the vehicle-only information on $\zeta=(\delta x^T,\delta\psi)^T$~\cite{barshalom2001estimation}.
The first term is target-packet averaging; the second is calibration error propagated through the relay frame and controlled by $S_v$.
\end{corollary}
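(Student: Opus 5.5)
The proof linearizes the local-vector model at the true state and uses block elimination of a linear-Gaussian least-squares problem. Perturb $z=(p,x,\psi)$ by $(\delta p,\delta x,\delta\psi)$. Invert \eqref{eq:model} to get $\ell_k^v=R^T(q_k-x)$ and $\ell^{t,*}=R^T(p-x)$. To first order, the vehicle rows perturb as
\[
\delta\ell_k^v=-R^T\delta x-S\ell_k^v\delta\psi ,
\]
using $\partial R^T/\partial\psi=-SR^T$. The target rows perturb as
\[
\delta\ell_k^t=R^T(\delta p-\delta x)-S\ell^{t,*}\delta\psi .
\]
Rotating each row by $R$ preserves the isotropic noise covariance $\sigma^2I$. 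After this rotation the vehicle residual rows are $\delta x+RS\ell_k^v\delta\psi$, up to sign, which is exactly the expression in Lemma~\ref{lem:spread}. The target rows become $\delta p-\delta x-RS\ell^{t,*}\delta\psi=\delta p-B\zeta$.

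\emph{Step 1: marginalize $p$.} Each target row depends on $(\delta p,\zeta)$ only through $\eta=\delta p-B\zeta$. Reparametrize $(\delta p,\zeta)\mapsto(\eta,\zeta)$; this map is unimodular. The target rows then carry information $K\sigma^{-2}I$ on $\eta$ alone, and the vehicle rows carry information $F_\zeta$ on $\zeta$ alone. The Fisher matrix is therefore block diagonal in $(\eta,\zeta)$. Marginalizing $p$ is equivalent to marginalizing $\eta$, so it leaves $F_\zeta$ untouched. This shows the information on $(x,\psi)$ comes from the vehicle packets alone.

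\emph{Step 2: the yaw Schur complement.} Write
\[
F_\zeta=\sigma^{-2}\sum_k A_k^TA_k,\qquad A_k=[I\;\;RS\ell_k^v].
\]
The $\delta x$ block is $K\sigma^{-2}I$. The cross block is $\sigma^{-2}RSK\bar\ell^v$. The yaw block is $\sigma^{-2}\sum_k\|\ell_k^v\|^2$, since $RS$ is orthogonal. The yaw Schur complement is then
\[
\sigma^{-2}\Big(\sum_k\|\ell_k^v\|^2-K\|\bar\ell^v\|^2\Big)=S_v/\sigma^2 .
\]
The same value follows directly from the minimization in Lemma~\ref{lem:spread}, read as a quadratic form minimized over $\delta x$. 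Inverting gives $\mathrm{var}(\hat\psi)=\sigma^2/S_v$, valid when $S_v>0$. This is the rank condition of Fact~\ref{fact:constructive}.

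\emph{Step 3: covariance of $\hat p$.} Under the flat prior, the efficient linear estimator has covariance equal to the inverse Fisher matrix. In $(\eta,\zeta)$ coordinates this covariance is $\mathrm{blkdiag}(\sigma^2K^{-1}I,F_\zeta^{-1})$. Since $\delta p=\eta+B\zeta$ and $\eta,\zeta$ are uncorrelated,
\[
\mathrm{cov}(\hat p)=\frac{\sigma^2}{K}I+BF_\zeta^{-1}B^T .
\]
Equivalently, the estimate is $\hat p=\hat x+R(\hat\psi)\bar{\tilde\ell}^{\,t}$, with the averaged target packet independent of the vehicle noise.

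\emph{Main obstacle.} The delicate step is justifying the block-diagonal decoupling in Step~1. One must check that the target rows give no net information on $\zeta$ once $p$ is free. Doing the Schur complement explicitly confirms this: the target block
\[
\sigma^{-2}K\begin{bmatrix}I&-B\\-B^T&B^TB\end{bmatrix}
\]
has zero Schur complement on $\zeta$. Beyond that, one only needs careful sign and frame bookkeeping, in particular that rotating the rows by $R$ leaves isotropic noise invariant.
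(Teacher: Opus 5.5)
Your proposal is correct and follows essentially the same route as the paper. In both, the target rows depend on $(\delta p,\zeta)$ only through $\delta p-B\zeta$, so their Schur complement on $\zeta$ vanishes; the vehicle-row Schur complement is $\sigma^{-2}(\sum_k\|\ell_k^v\|^2-K\|\bar\ell^v\|^2)=S_v/\sigma^2$; and $\mathrm{cov}(\hat p)$ splits into $(\sigma^2/K)I$ plus $BF_\zeta^{-1}B^T$. Your unimodular change of variables $\eta=\delta p-B\zeta$ makes the Fisher matrix genuinely block diagonal, which justifies that split more cleanly than the paper's two-stage argument about ``informationally orthogonal'' error sources, and your explicit linearization derives the row structure that the paper only asserts.
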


\begin{proof}
The target rows have variation $\delta p-B\zeta$ plus noise, so their joint Fisher information on $(p,\zeta)$ over $K$ packets is
\[
\mathcal I_{\rm target}=\frac{K}{\sigma^2}
\begin{bmatrix}I&-B\\-B^T&B^TB\end{bmatrix}.
\]
Eliminating $p$ by the Schur complement gives
\[
B^TB-B^T(I)^{-1}B=0,
\]
so the target rows alone carry no information on $\zeta$ once $p$ is free to absorb their variation.
The vehicle rows give joint information on $(x,\psi)$
\[
\mathcal I_{\rm vehicle}=\frac{1}{\sigma^2}
\begin{bmatrix}
KI & RS\sum_k\ell_k^v\\
\big(RS\sum_k\ell_k^v\big)^T & \sum_k\|\ell_k^v\|^2
\end{bmatrix}.
\]
Eliminating $x$ by the Schur complement and using $RS\sum_k\ell_k^v=KRS\bar\ell^v$ leaves the yaw information
\[
F_\psi=\frac{1}{\sigma^2}\Big(\textstyle\sum_k\|\ell_k^v\|^2-K\|\bar\ell^v\|^2\Big)=\frac{S_v}{\sigma^2},
\]
using $\|RS\bar\ell^v\|=\|\bar\ell^v\|$ and the centering identity of Lemma~\ref{lem:spread}.
Since the target rows contribute zero information on $\zeta$, $F_\zeta$ is exactly $\mathcal I_{\rm vehicle}$, and $\hat p$ decomposes as a two-stage estimator: the target-packet average contributes $(\sigma^2/K)I$ when $\zeta$ is known, and propagating the calibration estimate $\hat\zeta$ through the linear map $B$ adds $BF_\zeta^{-1}B^T$, because the two error sources are informationally orthogonal by the vanishing cross term above.
\end{proof}

Proposition~\ref{prop:seed} and Corollary~\ref{cor:variance} give the controller a quantitative reason to keep exploring when $S_v$ is small: seed error and calibration-propagated target variance both scale with $1/\sqrt{S_v}$ or $1/S_v$, not just with the binary rank condition of Fact~\ref{fact:constructive}.

\begin{corollary}[Excitation budget]
\label{cor:budget}
Let the stored window contain $K$ vehicle positions on a circle of radius $\rho$, $q_k=c+\rho u_k$ with $\|u_k\|=1$, and let $\bar u=K^{-1}\sum_k u_k$.
Then, in the noiseless model,
\[
S_v=K\rho^2(1-\|\bar u\|^2).
\]
For $K\ge2$ equally spaced positions over a full revolution, $\bar u=0$ and $S_v=K\rho^2$ exactly.
Thus the threshold $\bar S$ is met after at most $\lceil\bar S/\rho^2\rceil$ equally spaced poses, giving an explicit tradeoff between excitation radius, packet count, and calibration time.
\end{corollary}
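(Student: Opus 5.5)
We will work entirely in the world frame, using the noiseless identity $S_v=\sum_k\|q_k-\bar q\|^2$ from Lemma~\ref{lem:spread}. This identity holds because $q_k-\bar q=R(\ell_k^v-\bar\ell^v)$ under \eqref{eq:model}, and $R$ is an isometry. With this in hand, the relay pose drops out of the problem, and the corollary reduces to a statement about points on a circle.

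First, substitute $q_k=c+\rho u_k$. This gives $\bar q=c+\rho\bar u$, and therefore $q_k-\bar q=\rho(u_k-\bar u)$. Second, apply the same centering identity used in the proof of Corollary~\ref{cor:variance}, namely $\sum_k\|u_k-\bar u\|^2=\sum_k\|u_k\|^2-K\|\bar u\|^2$. Since $\|u_k\|=1$, this equals $K-K\|\bar u\|^2$. Multiplying by $\rho^2$ yields $S_v=K\rho^2(1-\|\bar u\|^2)$.

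Third, for $K\ge2$ equally spaced positions over a full revolution, write $u_k=(\cos(\theta_0+2\pi k/K),\sin(\theta_0+2\pi k/K))$. Then $\sum_k u_k$ is the real and imaginary part of $e^{i\theta_0}\sum_{k}e^{2\pi i k/K}$, which is a geometric sum of $K$-th roots of unity and vanishes for $K\ge2$. Hence $\bar u=0$ and $S_v=K\rho^2$.

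Finally, the budget statement follows from monotonicity in $K$. Choosing $K=\max\{2,\lceil\bar S/\rho^2\rceil\}$ equally spaced poses gives $S_v=K\rho^2\ge\bar S$. The phrase ``at most $\lceil\bar S/\rho^2\rceil$'' should be read with this caveat: the equal-spacing configuration is built for that $K$, and at least two poses are always needed.

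The only step needing care is this last one, and it is interpretive rather than computational. The identity $S_v=K\rho^2$ holds for a window that is equally spaced at its own $K$. A growing window of poses sampled at a fixed angular step is equally spaced over a full revolution only at particular values of $K$. I would therefore state the budget for the configuration at the chosen $K$, which is what the corollary asserts. I would also note that in general $S_v\le K\rho^2$, since $\|\bar u\|^2\ge0$, so $\lceil\bar S/\rho^2\rceil$ is also a lower bound on the number of poses on that circle for any placement.
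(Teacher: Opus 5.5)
Your proof is correct and follows the paper's argument essentially line for line: invoke the noiseless identity $S_v=\sum_k\|q_k-\bar q\|^2$ from Lemma~\ref{lem:spread}, substitute $\bar q=c+\rho\bar u$ to get $S_v=\rho^2\sum_k\|u_k-\bar u\|^2=\rho^2(K-K\|\bar u\|^2)$, and use the fact that equally spaced unit vectors over a full revolution sum to zero. Your $\max\{2,\lceil\bar S/\rho^2\rceil\}$ caveat is a genuine refinement that the paper's short proof omits, since for $0<\bar S\le\rho^2$ the ceiling is $1$ and a single pose gives $S_v=0$; so is your observation that $S_v\le K\rho^2$ makes $\lceil\bar S/\rho^2\rceil$ a lower bound on the pose count for any placement on the circle.
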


\begin{proof}
By Lemma~\ref{lem:spread}, $S_v=\sum_k\|q_k-\bar q\|^2$ with $\bar q=c+\rho\bar u$.
Thus $S_v=\rho^2\sum_k\|u_k-\bar u\|^2=\rho^2(K-K\|\bar u\|^2)$.
Equally spaced unit vectors over a full revolution sum to zero.
\end{proof}

\begin{remark}[Accuracy-driven threshold selection]
\label{rem:threshold}
In the noiseless model, rank recovery requires only $S_v>0$, so no fixed positive threshold is universal.
A principled $\bar S$ follows instead from an accuracy requirement: Corollary~\ref{cor:variance} gives $\mathrm{var}(\hat\psi)=\sigma^2/S_v$, so a required yaw standard deviation $\epsilon_\psi$ induces
\begin{equation}
\bar S\ge\frac{\sigma^2}{\epsilon_\psi^2},
\qquad\text{hence}\qquad
K\ge\frac{\sigma^2}{\rho^2\,\epsilon_\psi^2} \label{eq:designrule}
\end{equation}
under equally spaced circular excitation by Corollary~\ref{cor:budget}.
A calibration-accuracy requirement therefore translates directly into an excitation radius and packet count, and the supervisor's threshold becomes an accuracy-driven design quantity rather than a heuristic constant.
\end{remark}

\begin{table}[t]
\centering
\caption{Excitation budget of Corollary~\ref{cor:budget}.}
\label{tab:budget}
\scriptsize
\begin{tabular}{l c c}
\hline
Radius $\rho$ (m) & $K$ for $\bar S=9.04$ & $K$ for $\bar S=100$\\
\hline
0.5 & 37 & 400\\
1.0 & 10 & 100\\
1.5 & 5 & 45\\
2.0 & 3 & 25\\
\hline
\end{tabular}\\[3pt]
\parbox{\columnwidth}{\raggedright\scriptsize Note: $K=\lceil\bar S/\rho^2\rceil$ equally spaced poses reach spread threshold $\bar S$ at excitation radius $\rho$.}
\end{table}

Table~\ref{tab:budget} instantiates the design rule \eqref{eq:designrule} at the simulation noise level $\sigma=0.02$~m: $\bar S=9.04$ enforces $\epsilon_\psi\approx6.7$~mrad and $\bar S=100$ enforces $\epsilon_\psi=2$~mrad.
Neither value is a universal rank threshold (noiseless identifiability needs only $S_v>0$); both are accuracy targets, and $\bar S=9.04$ also coincides with the two-pose spread used in the conditioning study of~\cite{bagla2027identifiability}.
For native range-bearing packets, $\sigma_{\rm eff}$ below is a conservative first-order proxy for the local-vector noise, giving the rule an operating-set interpretation rather than an isotropic-noise requirement.
If all vehicle--relay ranges obey $0<r_k^v\le r_{\max}$, the native-information bound of~\cite{bagla2027identifiability} gives
\begin{equation}
\sigma_{\rm eff}=\max\{\sigma_r,r_{\max}\sigma_\theta\},\qquad
\bar S\ge\frac{\sigma_{\rm eff}^2}{\epsilon_\psi^2}. \label{eq:nativethreshold}
\end{equation}
Indeed the radial and tangential local-vector information eigenvalues are $\sigma_r^{-2}$ and $(r_k^v\sigma_\theta)^{-2}$, so the yaw Schur complement is at least $S_v/\sigma_{\rm eff}^2$.
Equation~\eqref{eq:nativethreshold} therefore converts bearing noise into its worst-case transverse position uncertainty over the declared range envelope; the isotropic formula is recovered when $\sigma\ge r_{\max}\sigma_\theta$.
Doubling the excitation radius quarters the required pose count at fixed threshold, since $K=\lceil\bar S/\rho^2\rceil$; a controller with authority to choose $\rho$ can therefore trade a larger excitation loop for fewer packets and less calibration time, or a tighter loop for more packets, without any new simulation, just the closed form above.
The closed-loop ablation of Section~\ref{sec:ablation} probes the complementary boundary: a threshold beyond the spread reachable at the loop's actual excitation radius and packet budget degenerates to continuous excitation, so large $\bar S$ values are meaningful only together with the radius the budget prescribes for them.

\section{Excitation-Supervised Estimation and Target Seeking}
\label{sec:control}

Estimation uses the Gauss--Newton implementation of~\cite{bagla2027identifiability}, with analytic Jacobians for the residuals in \eqref{eq:cost} and Levenberg damping.
The target estimate remains at its prior until the stored window contains two distinct views; the implementation then applies the constructive seed once and warm-starts every later packet from the previous refined estimate, with damping initialized at $10^{-2}$ and at most 25 iterations per packet.
The vehicle obeys $\dot q=u$ with
\begin{align}
u&=u^{\rm seek}(t)+u^{\rm exp}(t), \nonumber\\
u^{\rm exp}(t)&=Ae^{-\lambda(t-t_0)}
\begin{bmatrix}\cos\omega t\\ \sin\omega t\end{bmatrix}, \label{eq:controller}
\end{align}
where $u^{\rm seek}$ is the nominal target-seeking law $-k(q-\hat p)$, projected while the stored window is underexcited so that it can never cancel the excitation's push:
\begin{equation}
u^{\rm seek}=\begin{cases}
\Pi_{\mathcal H(t)}\big[-k(q-\hat p)\big], & S_v(\mathcal{W})<\bar S,\\[1pt]
-k(q-\hat p), & S_v(\mathcal{W})\ge\bar S,
\end{cases}
\label{eq:projection}
\end{equation}
with $\mathcal H(t)=\{v:v^Tn(t)\ge-Ae^{-\lambda\bar T}/\pi\}$ and $n(t)=(-1)^{\lfloor\omega t/\pi\rfloor}[0,\ 1]^T$ the push direction of the current excitation half-period ($\bar T$ is the inter-packet bound of Proposition~\ref{prop:acquisition}).
Because the excitation phase is absolute, $n(t)$ is a causal, implementable quantity: half-periods tile at $\omega t=j\pi$, and $n$ survives epoch resets since it does not depend on $t-t_0$.
The projection $\Pi_{\mathcal H}$ clips only the seeking component opposing $n$, leaves lateral and aligned motion untouched, and the nominal law is restored exactly once the certificate clears; it is what lets the finite-acquisition guarantee below cover the controller as implemented rather than an idealized variant.
The excitation epoch $t_0$ resets whenever the stored packet window $\mathcal{W}$ is not yet excited enough; Algorithm~\ref{alg:supervised} makes this check explicit using the spread certificate $S_v(\mathcal{W})$ alone, computed from the known vehicle poses of the stored window, so both the reset trigger and the projection window are exactly the quantity covered by the acquisition guarantee of Proposition~\ref{prop:acquisition}.
The implementation uses the physical packet time $t_k=(k-1)\Delta t$, not the packet index, and applies the zero-order-hold update
\begin{align}
q_{k+1}={}&q_k+\Delta t\{u^{\rm seek}_k \nonumber\\
&+Ae^{-\lambda(t_k-t_0)}[\cos\omega t_k,\ \sin\omega t_k]^T\}, \label{eq:sampledcontroller}
\end{align}
with $u^{\rm seek}_k$ the projected law \eqref{eq:projection} evaluated at $(q_k,\hat p_k,t_k)$.
Thus $\lambda$ is reported in s$^{-1}$, $\omega$ in rad/s, and the batch and Gazebo implementations have the same phase, decay, and projection semantics.

Figure~\ref{fig:flowchart} previews the supervision loop that Algorithm~\ref{alg:supervised} makes precise.

\begin{figure}[t]
\centering
\resizebox{0.92\linewidth}{!}{%
\begin{tikzpicture}[
  node distance=7mm and 8mm,
  box/.style={rectangle, rounded corners, draw, fill=blue!6, align=center, minimum height=7mm, text width=30mm, font=\footnotesize},
  dec/.style={diamond, aspect=2.2, draw, fill=orange!8, align=center, inner sep=1pt, text width=26mm, font=\footnotesize},
  arr/.style={-{Latex[length=2mm]}, thick}
]
\node[box] (measure) {Measure packet; update $S_v(\mathcal{W})$};
\node[dec, below=of measure] (dec1) {Two distinct\\ views in $\mathcal{W}$?};
\node[box, below=of dec1] (seed) {Averaged constructive seed + Gauss--Newton refine};
\node[dec, below=of seed] (dec2) {$S_v(\mathcal{W})\ge\bar S$?};
\node[box, below=of dec2] (seek) {Target seeking: $u=-k(q-\hat p)+u^{\rm exp}(t)$};
\node[box, right=16mm of dec1, text width=26mm] (explore) {Retrigger: reset $t_0$; apply \eqref{eq:projection} $+\,u^{\rm exp}(t)$};

\draw[arr] (measure) -- (dec1);
\draw[arr] (dec1) -- node[left,font=\scriptsize]{yes} (seed);
\draw[arr] (dec1) -- node[above,font=\scriptsize]{no} (explore);
\draw[arr] (seed) -- (dec2);
\draw[arr] (dec2) -- node[left,font=\scriptsize]{yes} (seek);
\draw[arr] (dec2.east) -- ++(8mm,0) node[above,font=\scriptsize]{no} -| (explore.south);
\draw[arr] (explore.north) -- ++(0,10mm) |- (measure.east);
\draw[arr] (seek.south) -- ++(0,-6mm) -| ++(-45mm,0) |- (measure.west);
\end{tikzpicture}}
\caption{Supervision loop of Algorithm~\ref{alg:supervised}: exploratory motion is retriggered whenever the stored window's spread certificate is below threshold, with the seeking input projected so it cannot oppose the excitation, and unrestricted target seeking proceeds once the certificate clears.}
\label{fig:flowchart}
\end{figure}

\begin{algorithm}[t]
\caption{Excitation-supervised self-calibration and target seeking}
\label{alg:supervised}
\begin{algorithmic}[1]
\Require gains $k,A,\lambda,\omega$; spread threshold $\bar S>0$; initial target estimate $\hat p_0$
\State $\mathcal{W}\leftarrow\emptyset$, $t_0\leftarrow0$, $\hat p\leftarrow\hat p_0$, initialized$\leftarrow$false
\For{each packet $k$ received at time $t$}
  \State $\mathcal{W}\leftarrow \mathcal{W}\cup\{(q_k,\text{packet})\}$
  \State update $S_v(\mathcal{W})$
  \If{not initialized and $\mathcal{W}$ contains two distinct observations}
    \State compute $(\hat\psi,\hat x)$ from the two-view formulas in Fact~\ref{fact:constructive}
    \State set $\hat p\leftarrow\hat x+R(\hat\psi)\bar{\tilde\ell}^{\,t}$ using the target-packet mean over $\mathcal{W}$
    \State initialized$\leftarrow$true
  \EndIf
  \If{initialized}
    \State refine $(\hat\psi,\hat x,\hat p)$ by warm-started damped Gauss--Newton~\cite{nocedal2006numerical}
  \EndIf
  \If{$S_v(\mathcal{W})<\bar S$}
    \State $t_0\leftarrow t$ \Comment{retrigger excitation envelope \eqref{eq:controller}}
  \EndIf
  \State apply $u=u^{\rm seek}+u^{\rm exp}(t)$ with the projected seeking law \eqref{eq:projection}
\EndFor
\end{algorithmic}
\end{algorithm}

The recovery result below is conditional on the stored window eventually meeting the excitation threshold.
The next proposition discharges that condition: the supervision rule itself acquires any required spread in explicit finite time, so retriggering is not merely a heuristic hope but a guaranteed acquisition mechanism.

\begin{proposition}[Finite acquisition of excitation]
\label{prop:acquisition}
Let $A>0$, $\omega>0$, $\lambda\ge0$, and consider Algorithm~\ref{alg:supervised} with the controller \eqref{eq:controller}--\eqref{eq:projection} in the noiseless model, applied with zero-order hold as in \eqref{eq:sampledcontroller}: the command is computed at each stored packet and held until the next.
Suppose that during underexcited operation ($S_v(\mathcal{W})<\bar S$):
(i) packets are stored with inter-sample times in $[\underline T,\bar T]$, $0<\underline T\le\bar T$;
(ii) the seeking input is the projected law \eqref{eq:projection}, so $u^{\rm seek}(t)^Tn(t)\ge-Ae^{-\lambda\bar T}/\pi$ by construction;
(iii) sampling is fast relative to the excitation, $8\omega\bar T\le e^{-\lambda\bar T}$;
(iv) the window $\mathcal{W}$ is cumulative: every stored packet is retained, none is evicted.
Then for every threshold $\bar S>0$, with $\delta:=Ae^{-\lambda\bar T}/(2\omega)$, the stored window satisfies $S_v(\mathcal{W})\ge\bar S$ within time
\[
T^\star\le\frac{\pi}{\omega}\Big(\Big\lceil\frac{2\bar S}{\delta^2}\Big\rceil+2\Big)
\]
of the start of underexcited operation, and Algorithm~\ref{alg:supervised} performs at most $\lfloor T^\star/\underline T\rfloor+1$ spread-triggered resets within that window.
\end{proposition}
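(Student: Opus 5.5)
Since the window is cumulative and the model is noiseless, Lemma~\ref{lem:spread} lets us work entirely with vehicle positions: $S_v(\mathcal W)=\sum_k\|q_k-\bar q\|^2$. The plan is to show that every excitation half-period that lies fully inside underexcited operation adds a fixed, positive amount of spread. Adding spread is monotone in the stored set, since $\min_c\sum\|q_k-c\|^2$ only grows when points are added.

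\textbf{Step 1: displacement per half-period.} Fix a half-period $I_j=[j\pi/\omega,(j+1)\pi/\omega)$ contained in the underexcited interval. On $I_j$ the sign of $\sin\omega t$ agrees with $n(t)$, so the continuous excitation's $n$-component is
\[
Ae^{-\lambda(t-t_0)}|\sin\omega t|\ \ge\ Ae^{-\lambda\bar T}|\sin\omega t|.
\]
Here we use that $t_0$ is reset at every underexcited packet, so $t-t_0\le\bar T$ by (i). Under zero-order hold the applied value is $\sin\omega t_k$ rather than $\sin\omega t$; the error is at most $\omega\bar T$ in argument. Condition (iii), $8\omega\bar T\le e^{-\lambda\bar T}$, is what I would use to bound this aliasing loss, together with the partial first and last hold intervals straddling the half-period boundaries. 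Integrating $|\sin|$ over a half-period gives $2/\omega$. Subtracting the seeking penalty $\le A e^{-\lambda\bar T}/\pi$ times $\pi/\omega$ from (ii) and the hold losses, I would aim for net $n$-displacement $\ge 2\delta$ over $I_j$ (or at least $\delta$ per sampled interior sub-interval). Getting constants to match $\delta=Ae^{-\lambda\bar T}/(2\omega)$ exactly is the main obstacle. I would first compute crude losses ($\le A e^{-\lambda\bar T}\bar T$ per boundary, $\le A e^{-\lambda\bar T}\omega\bar T\cdot\pi/\omega$ for aliasing), then use (iii) to absorb them into a margin.

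\textbf{Step 2: spread from displacement.} If two stored positions satisfy $|n^T(q_a-q_b)|\ge 2\delta$, then any set containing them has $S_v\ge\|q_a-q_b\|^2/2\ge 2\delta^2$ contribution (the $K=2$ identity $d^2=2S_v$ from Proposition~\ref{prop:seed}, plus monotonicity). A single pair does not accumulate, however. Instead I would project onto the $[0,1]$ axis and argue that the sampled positions within each half-period oscillate by at least $\delta$ from any fixed center. Consecutive half-periods push in opposite directions, so at least one stored packet per full period lies at distance $\ge\delta/\sqrt2$ from the running mean. A cleaner route is to pick in each half-period the packet farthest in direction $n$ from the current centroid $c$; since the trajectory moves $\ge2\delta$ along $n$ within the half-period, some stored sample is at distance $\ge\delta$ from $c$. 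This gives a contribution $\ge\delta^2$ per half-period to $\sum\|q_k-c\|^2$ for every $c$, hence to $S_v$. This yields about $\lceil 2\bar S/\delta^2\rceil$ needed half-periods when bounded conservatively (pairing half-periods).

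\textbf{Step 3: counting.} Underexcited operation may start mid-half-period, which costs at most one partial half-period at the start; one more is allowed as slack for the final sample, giving the $+2$. Thus $T^\star\le(\pi/\omega)(\lceil2\bar S/\delta^2\rceil+2)$. Resets occur only at underexcited packets, which are separated by at least $\underline T$, so at most $\lfloor T^\star/\underline T\rfloor+1$ resets occur.
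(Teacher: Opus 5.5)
Your architecture matches the paper's: a per-half-period displacement along $n$, obtained from the sign-aligned excitation minus the projected seeking allowance; (iii) to absorb zero-order-hold losses; $S_v=\min_c\sum_k\|q_k-c\|^2$ to convert displacement into spread; two half-periods of slack; and the reset count from $\underline T$. You flagged the constants as the obstacle, and the quantitative core does not close as written.

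\textbf{The $2\delta$ target.} It cannot be derived from the hypotheses. With your amplitude bound $A_-|\sin\omega t|$, where $A_-=Ae^{-\lambda\bar T}$, the ideal continuous-time $n$-impulse over a half-period is $2A_-/\omega$. A seeking input sitting at the level $-A_-/\pi$ permitted by (ii) removes $A_-/\omega$, which leaves exactly $2\delta$ before any sampling loss. The same happens at full held amplitude $A$ when $\lambda=0$. What is provable is $\delta$.

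\textbf{The half-period boundaries.} Integrating over the whole interval $I_j$ is uncontrolled at its start. The hold straddling $j\pi/\omega$ carries a seeking command projected against the previous push direction, which is $-n$ on $I_j$. Its component along the current $n$ therefore has no lower bound at all, while your loss of at most $Ae^{-\lambda\bar T}\bar T$ per boundary covers only the excitation. Also, $S_v$ sees only stored poses, so displacement of the continuous trajectory over $I_j$ does not transfer directly to stored samples.

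The fix is to measure between the first stored packet $P_1$, at $t_1\in[s,s+\bar T]$, and the last stored packet $P_2$, at $t_2\in[s+\pi/\omega-\bar T,s+\pi/\omega]$. Every command active on $[t_1,t_2]$ was then computed inside the half-period, at full amplitude $A$ since $t_0$ was just reset, so (ii) applies against the same $n$. Bound the left-endpoint Riemann-sum error by $\pi A\bar T/2$ and the two clipped edges by $A\omega\bar T^2$. This gives $(P_2-P_1)^Tn\ge2A/\omega-A\omega\bar T^2-\pi A\bar T/2-A_-/\omega\ge\delta$. The last inequality is $(\omega\bar T)^2+\tfrac{\pi}{2}\omega\bar T\le2-\tfrac32e^{-\lambda\bar T}$, which (iii) implies because it forces $\omega\bar T\le1/8$.

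\textbf{The accumulation step.} This step loses a factor of two. With the true displacement $\delta$, keeping only the sample farthest from $c$ in each half-period yields $(\delta/2)^2$ per half-period. That requires $\lceil4\bar S/\delta^2\rceil$ half-periods, not the claimed $\lceil2\bar S/\delta^2\rceil$. Your stated count is reached only through the unavailable $2\delta$ plus informal ``pairing.''

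The idea you set aside, that a single pair does not accumulate, is exactly what closes the argument. Pairs from different half-periods are disjoint, and a minimum of a sum is at least the sum of the minima. Hence
\[
S_v=\min_c\sum_k\|q_k-c\|^2\ge\sum_{i=1}^N\min_c\big(\|P_1^{(i)}-c\|^2+\|P_2^{(i)}-c\|^2\big)=\sum_{i=1}^N\frac{\|P_1^{(i)}-P_2^{(i)}\|^2}{2}\ge\frac{N\delta^2}{2},
\]
which gives exactly $N\ge2\bar S/\delta^2$.
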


\begin{proof}
While $S_v(\mathcal{W})<\bar S$, Algorithm~\ref{alg:supervised} resets $t_0\leftarrow t$ at every stored packet before the command is computed, so each hold interval $[t_k,t_{k+1})$ of \eqref{eq:sampledcontroller} carries the excitation at full amplitude with frozen phase, $u^{\rm exp}\equiv A[\cos\omega t_k,\ \sin\omega t_k]^T$; write $A_-:=Ae^{-\lambda\bar T}$ for the decayed level that sets the seeking allowance and $\delta$.
The interval before the first packet lasts at most $\bar T\le1/(8\omega)$ by (iii) and is absorbed into the alignment slack below.
Fix a half-period $[s,s+\pi/\omega]$ of underexcited operation, on which $n(t)\equiv n_s=[-\sin\omega s,\ \cos\omega s]^T$.
For a packet $t_k$ in the half-period, the held excitation pushes along $n_s$, $u^{\rm exp}(t_k)^Tn_s=A\sin(\omega(t_k-s))\ge0$, and the held seeking input obeys $u^{\rm seek}(t_k)^Tn_s\ge-A_-/\pi$ by (ii); both bounds persist for as long as the command is held, and every command active on $[t_1,t_2]$ below was computed at a packet time inside the same half-period.
By (i) there are stored packets $P_1$ taken at some $t_1\in[s,s+\bar T]$ and $P_2$ at some $t_2\in[s+\pi/\omega-\bar T,\ s+\pi/\omega]$; both lie inside the half-period, and they are distinct packets since $t_2-t_1\ge\pi/\omega-2\bar T>0$ by (iii).
The displacement between them accumulates the held commands.
The excitation part is the left-endpoint sum $\sum_kA\sin(\omega(t_k-s))\,(t_{k+1}-t_k)$ over $[t_1,t_2]$, which differs from $\int_{t_1}^{t_2}A\sin(\omega(t-s))\,dt$ by at most $(\bar T/2)(t_2-t_1)A\omega\le\pi A\bar T/2$ (the integrand's slope is at most $A\omega$ and each hold lasts at most $\bar T$); the integral itself is at least $2A/\omega-A\omega\bar T^2$ (each clipped edge costs at most $A\omega\bar T^2/2$, since $\sin\omega\tau\le\omega\tau$ there).
The held seeking part contributes at least $-(A_-/\pi)(t_2-t_1)\ge-A_-/\omega$, with no discretization error since its bound holds per hold interval.
Hence
\begin{equation}
(P_2-P_1)^Tn_s
\ge\frac{2A}{\omega}-A\omega\bar T^2-\frac{\pi A\bar T}{2}-\frac{A_-}{\omega}
\ge\frac{A_-}{2\omega}=\delta,
\end{equation}
where the second inequality is $(\omega\bar T)^2+\tfrac{\pi}{2}\omega\bar T\le2-\tfrac32e^{-\lambda\bar T}$, which (iii) implies with margin: $8\omega\bar T\le e^{-\lambda\bar T}\le1$ (using $\lambda\ge0$) gives $\omega\bar T\le1/8$, so the left side is at most $1/64+\pi/16<1/2$ while the right side is at least $1/2$.
Hence $\|P_1-P_2\|\ge\delta$, and consecutive half-periods contribute disjoint pairs: each pair lives inside its own half-period, and at most one stored packet can sit on a shared boundary instant (inter-sample times are positive), so representatives can be chosen distinct.
In the noiseless model $S_v=\sum_k\|q_k-\bar q\|^2=\min_c\sum_k\|q_k-c\|^2$ by Lemma~\ref{lem:spread}; dropping all non-pair terms and minimizing each pair separately at its midpoint,
\[
S_v\ge\sum_{i=1}^{N}\frac{\|P_1^{(i)}-P_2^{(i)}\|^2}{2}\ge\frac{N\delta^2}{2}.
\]
Thus $S_v\ge\bar S$ once $N\ge2\bar S/\delta^2$ half-periods of length $\pi/\omega$ have elapsed, plus at most two half-periods of alignment slack (one to reach the first half-period boundary, one absorbing the pre-first-packet interval).
By (iv) the window only grows, and $S_v$ is nondecreasing as it does (the minimum over $c$ can only increase when nonnegative terms are added), so the threshold, once met, stays met, and the spread trigger causes no further resets.
Each reset occurs at a stored packet, and a window of length $T^\star$ contains at most $\lfloor T^\star/\underline T\rfloor+1$ stored packets (including the one at its start), so at most that many resets occur.
\end{proof}

The bound credits only two stored packets per half-period, so it is conservative relative to simulated acquisition rates; its value is the logic chain it completes: supervision acquires excitation in finite time (Proposition~\ref{prop:acquisition}), excitation removes the gauge (Fact~\ref{fact:constructive}), and the recovery guarantee below then applies with its finitely-many-resets hypothesis discharged. Proposition~\ref{prop:acquisition} permits $\lambda=0$, whereas asymptotic target recovery additionally requires $\lambda>0$ so the excitation vanishes after the final reset.
Because Algorithm~\ref{alg:supervised} triggers on $S_v$ alone, the acquisition guarantee covers the quantity the controller actually uses; Proposition~\ref{prop:acquisition} does not require evaluating a coordinate-dependent matrix condition number online.

The implemented parameters make every hypothesis checkable, and the released per-packet logs make the check empirical rather than nominal.
With $k=1.2$~s$^{-1}$, $A=0.25$~m/s, $\omega=0.45$~rad/s, and $\bar T=\underline T=\Delta t=0.08$~s, hypothesis (i) holds exactly (fixed sampling), (ii) holds by construction because the projection \eqref{eq:projection} is part of the implemented controller, (iv) holds by construction (the stored window only grows), and (iii) holds across the entire decay sweep: $8\omega\bar T=0.288$, while $e^{-\lambda\bar T}$ ranges from $0.998$ at $\lambda=0.02$ down to $0.852$ at $\lambda=2$, so the sampling condition would first fail beyond $\lambda\approx15.6$~s$^{-1}$, far outside the tested range.
The projection is the step that closes the gap between theorem and controller, and it is not vacuous.
An unprojected seeking term at the implemented gain would exceed the $Ae^{-\lambda\bar T}/\pi$ allowance whenever the vehicle sits more than $Ae^{-\lambda\bar T}/(\pi k)=5.7$~cm from its estimated target ($\lambda=2$), while the steady orbit radius of \eqref{eq:controller} about $\hat p$ at full amplitude is $A/\sqrt{k^2+\omega^2}=19.5$~cm; the amplitude cancels from that comparison, so no gain retuning escapes it: an unprojected law satisfies the allowance in steady orbit only for $\pi k\le\sqrt{k^2+\omega^2}\,e^{-\lambda\bar T}$, i.e.\ $k\le0.127$~s$^{-1}$ at these $\omega$ and $\lambda$, and raising $\omega$ instead at $k=1.2$ would need $\omega\ge4.26$~rad/s, which violates (iii).
Projecting only the component that opposes $n(t)$ is therefore the minimal intervention that restores the guarantee at the implemented gain: it leaves lateral and aligned seeking untouched, and it engages exactly where the unprojected law would break the half-period displacement inequality.
Measured packet by packet in the no-transient supervised run, the projection clips $15$ of the $31$ underexcited packets, the measured $\|q-\hat p\|$ grows from zero to a $23.6$~cm plateau as the swirl orbit opens (wider than the unprojected $19.5$~cm steady orbit, because the clipped pull no longer drags the loop closed), and the certificate clears $\bar S=0.16$ at packet $32$ ($2.5$~s).
At $\lambda=2$ the bound evaluates to $\delta=0.237$ and $T^\star\le55.9$~s ($698$ packets), so acquisition completes a factor of $22$ inside it; the remaining conservatism is structural, since the proof credits only two stored packets per half-period, while the loop stores $\pi/(\omega\bar T)\approx87$ and clears the threshold before a single half-period completes.
The finitely-many-resets hypothesis that Proposition~\ref{prop:acquisition} feeds the recovery result is also verified directly in the logs: $31$ resets, then none, with $S_v$ monotone thereafter.
The archived code and data include the per-packet logs and a script that reproduces every number in this check.

\begin{proposition}[Closed-loop recovery]
\label{prop:closedloop}
Suppose Algorithm~\ref{alg:supervised} uses $\lambda>0$, performs finitely many resets, and the stored window eventually satisfies $S_v\ge\bar S>0$.
Then the rank condition of Fact~\ref{fact:constructive} holds and the constructive seed is well defined.
In the noiseless case, if the Gauss--Newton iterate enters the local basin around the true state, then $\hat p(t)\to p$ by local convergence of the damped Gauss--Newton iteration~\cite{nocedal2006numerical}; substituting this into the tracking-error dynamics derived below and invoking input-to-state stability then gives $q(t)\to p$~\cite{sontag1995characterizations,khalil2002nonlinear}.
If instead $\|\hat p-p\|\le\bar e$ after the final reset, then $q(t)$ converges to the ball $\|q-p\|\le\bar e$ for the unit-DC-gain first-order seeking loop.
\end{proposition}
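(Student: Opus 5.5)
The argument splits into three parts: the rank/seed claim, exact convergence in the noiseless case, and the ultimate-bound claim. All three rest on a single post-reset reduction of the closed loop.

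\emph{Rank and seed.} In the noiseless model, Lemma~\ref{lem:spread} gives $S_v=\sum_k\|\ell_k^v-\bar\ell^v\|^2$. So $S_v\ge\bar S>0$ forces some $\ell_k^v\neq\bar\ell^v$. If all the $\ell_k^v$ were equal they would coincide with their mean and $S_v$ would vanish. Hence the window contains two distinct local vehicle vectors. Fact~\ref{fact:constructive} then gives the rank-five Jacobian and a well-defined two-view seed, since $\angle(\ell_b^v-\ell_a^v)$ is defined when $\ell_a^v\neq\ell_b^v$. I would also note that the seed is produced by Algorithm~\ref{alg:supervised} at the first packet with two distinct views, which occurs no later than the packet at which the certificate clears.

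\emph{Post-reset dynamics.} Let $t_f$ be the time of the final reset, which exists by hypothesis. Because $S_v$ is nondecreasing on a cumulative window (as in the proof of Proposition~\ref{prop:acquisition}), after $t_f$ the certificate stays cleared. The projection in \eqref{eq:projection} is therefore inactive and $u^{\rm seek}=-k(q-\hat p)$. With $e=q-p$ and $\tilde p=\hat p-p$, the closed loop becomes
\[
\dot e=-ke+k\tilde p(t)+Ae^{-\lambda(t-t_f)}[\cos\omega t,\ \sin\omega t]^T,
\]
which is a linear, exponentially stable system driven by two inputs. The ISS estimate (in fact the explicit variation-of-constants bound for this scalar-gain system) is
\[
\|e(t)\|\le e^{-k(t-t_f)}\|e(t_f)\|+\sup_{s\ge t_f}\|\tilde p(s)\|+\int_{t_f}^t e^{-k(t-s)}Ae^{-\lambda(s-t_f)}ds .
\]
The coefficient on $\sup\|\tilde p\|$ is one because $\int_0^\infty ke^{-k\tau}d\tau=1$; this is the unit-DC-gain property. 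The last term vanishes as $t\to\infty$ because $\lambda>0$: it is a convolution of two decaying exponentials.

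\emph{Convergence claims.} In the noiseless case, once the iterate is in the Gauss--Newton basin, Fact~\ref{fact:constructive} gives a full-rank Jacobian and the residual is zero at the truth. Standard local convergence of damped Gauss--Newton for zero-residual, full-rank problems~\cite{nocedal2006numerical} then yields $\hat p\to p$. The true state is fixed, and new packets only add zero-residual rows, which preserves both the minimizer and the basin. Feeding $\tilde p\to0$ into the converging-input/converging-state form of ISS~\cite{sontag1995characterizations,khalil2002nonlinear}, via the standard split of the convolution at $t/2$, gives $e\to0$. For the bounded-error case, taking $\limsup$ in the estimate above yields $\limsup\|q-p\|\le\bar e$.

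I expect the main obstacle to be justifying that the warm-started, per-packet iteration stays in the basin as the window grows, so that $\hat p(t)\to p$ holds along the closed-loop time axis rather than only for a fixed batch. The plan is to treat the basin entry as the stated hypothesis and argue invariance using the zero-residual structure. A secondary subtlety is that $\hat p$ is piecewise constant under zero-order hold. This causes no difficulty, since the ISS bound only needs $\tilde p$ to be essentially bounded and convergent.
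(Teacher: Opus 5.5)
Your proposal is correct and follows essentially the same route as the paper: rank and seed from $S_v>0$ via Lemma~\ref{lem:spread} and Fact~\ref{fact:constructive}, then the post-reset error dynamics $\dot e=-ke+k(\hat p-p)+u^{\rm exp}$ with the projection disengaged, converging-input ISS for $q\to p$, and the unit-gain bound for the ball; your explicit variation-of-constants estimate simply makes concrete the paper's abstract ISS appeal and the unit coefficient on $\sup\|\hat p-p\|$. Two minor points, which the paper also glosses over: the command held just after the final reset was computed with $S_v<\bar S$, so the unprojected dynamics start one packet later (a harmless shift of initial time), and adding zero-residual rows does not literally ``preserve the basin,'' so basin invariance along the growing window is best absorbed into the stated basin-entry hypothesis, as the paper implicitly does.
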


\begin{proof}
Since $S_v>0$, Lemma~\ref{lem:spread} gives two distinct local vehicle vectors and a nonzero difference vector, so Fact~\ref{fact:constructive} applies and the estimator's rank condition holds.
Let $e=q-p$ after the final reset; the certificate has cleared, so the projection \eqref{eq:projection} is disengaged and the seeking law is exactly $-k(q-\hat p)$. Substituting $\hat p=p+(\hat p-p)$ into the control law gives the tracking-error dynamics
\begin{equation}
\dot e=-ke+k(\hat p-p)+u^{\rm exp}(t). \label{eq:trackingerror}
\end{equation}
The unforced system $\dot e=-ke$ is exponentially stable, so \eqref{eq:trackingerror} is ISS in the input $(\hat p-p,u^{\rm exp})$~\cite{sontag1995characterizations,khalil2002nonlinear}.
If $\hat p-p\to0$ and $u^{\rm exp}(t)\to0$ (the envelope decays exponentially after the final reset), the converging-input property of ISS gives $e\to0$, i.e., $q\to p$.
If instead $\|\hat p-p\|\le\bar e$ is only bounded, the same linear, unit-DC-gain system gives $\limsup_t\|e(t)\|\le\bar e$.
\end{proof}

To our knowledge, Algorithm~\ref{alg:supervised} is the first closed-loop hidden-target-seeking method for a single unknown-pose range-bearing relay in which the same trajectory-spread certificate quantifies calibration accuracy, triggers exploration online, and, under the stated sampling and cumulative-window assumptions, is guaranteed to cross any prescribed threshold in finite time under the implemented sample-and-hold controller. The control contribution is the certificate-supervised hybrid law, particularly the underexcited-phase projection \eqref{eq:projection} that makes the online trigger and finite-acquisition proof refer to the controller as implemented, without requiring online trajectory optimization or a matrix-valued observability objective.
The seeking law could equally be replaced by a receding-horizon planner that carries the estimate's remaining uncertainty into the motion constraints, as in chance-constrained motion planning for uncertain multi-agent systems~\cite{bagla2019receding}; the supervision layer is agnostic to that choice.

\section{Validation}
\label{sec:validation}

All simulations use the same C++ estimator core as the identifiability study of~\cite{bagla2027identifiability}, so the closed-loop results below are consistent with its open-loop Monte Carlo validation; Section~\ref{sec:gazebo} additionally runs the same supervised loop software-in-the-loop through a ROS~2/Gazebo stack.
Throughout this section the supervisor runs with $\bar S=0.16$, selected by \eqref{eq:nativethreshold} from the accuracy criterion these experiments declare as success.
The flagship trajectory has $r_{\max}=4.08$~m, hence $r_{\max}\sigma_\theta=0.0163$~m is below $\sigma_r=0.02$~m and the conservative first-order noise proxy is $\sigma_{\rm eff}=0.02$~m; therefore $\bar S=\sigma_{\rm eff}^2/\epsilon_\psi^2=0.16$ targets $\epsilon_\psi=0.05$~rad.
Because Corollary~\ref{cor:variance} is a variance statement, $\epsilon_\psi$ is a population target rather than a per-trial bound: aggregate results report the across-trial root-mean-square error $(\frac1M\sum_j\mathrm{wrap}(\hat\psi_j-\psi)^2)^{1/2}$ with percentile-bootstrap 95\% confidence intervals over $M=100$ paired trials (both policies consume the same noise realization per trial), plus the fraction of trials meeting the same fixed 0.05-rad level per trial; the design rule promises the former, not the latter.

\subsection{Closed-Loop Simulation}
\label{sec:flagship}

Algorithm~\ref{alg:supervised} was run in closed loop for 120 packets under the kinematic model of \eqref{eq:controller}.
At the first packet, the window has not yet accumulated spread, and the errors reflect that: target error $1.415$ m, vehicle-to-goal distance $5.372$ m, beacon-position error $4.427$ m, and beacon-yaw error $0.750$ rad.
Figure~\ref{fig:closedloop_traj} plots the resulting trajectory against the true target and the final estimate.

\begin{figure}[t]
\centering
\begin{tikzpicture}
\begin{axis}[
    width=\columnwidth,
    axis equal image,
    xlabel={$x$ (m)}, ylabel={$y$ (m)},
    grid=major, grid style={black!10},
    tick label style={font=\footnotesize},
    label style={font=\footnotesize},
    xmin=-3.7, xmax=2.7, ymin=-1.35, ymax=3.35,
]
\addplot[figblue, thick] table[x=x, y=y] {closed_loop_run.dat};
\addplot[only marks, mark=*, mark size=1.8pt, black] coordinates {(-3.0,2.6)};
\addplot[only marks, mark=*, mark size=2.2pt, figred] coordinates {(1.2,-0.75)};
\addplot[only marks, mark=x, mark size=3.2pt, ultra thick, figgreen] coordinates {(1.2036,-0.7485)};
\node[font=\footnotesize, figblue, anchor=south west] at (axis cs:-1.05,1.25) {$q_k$};
\node[font=\footnotesize, anchor=west] at (axis cs:-2.85,2.55) {$q_0$};
\node[font=\footnotesize, figred, anchor=south west] at (axis cs:1.32,-0.72) {$p$};
\node[font=\footnotesize, figgreen, anchor=north west] at (axis cs:1.32,-0.85) {$\hat p(60)$};
\end{axis}
\end{tikzpicture}
\caption{Closed-loop trajectory for the first 60 packets under Algorithm~\ref{alg:supervised}: the vehicle path $q_k$ from the initial pose $q_0$, the true target $p$, and the final estimate $\hat p(60)$.}
\label{fig:closedloop_traj}
\end{figure}

The estimator is deliberately inactive at packet one because the window has only one view; the prior target estimate is retained instead of feeding a gauge-dependent solution into the controller.
At packet two the constructive initializer becomes available, the target error falls from $1.415$~m to $0.249$~m, and Gauss--Newton refinement is enabled.
The supervisor retriggers at the first two packets, during which the projection \eqref{eq:projection} clips the seeking input: the stored-window spread is $0.0475$ after packet two, below $\bar S=0.16$, and reaches $0.236$ at packet three, after which the certificate stays cleared and the nominal law resumes.
The transit, not the excitation, supplies this spread: with the excitation term disabled entirely the same run still certifies at packet three ($S_v=0.37$ versus $0.24$) at the same final accuracy, hence the nearly straight path in Figure~\ref{fig:closedloop_traj}, whose excitation contributes only $0.53$~m of the $5.6$~m path. Section~\ref{sec:baseline} isolates the regime where the excitation term is load-bearing.
By packet 60, target error is $0.0039$~m, vehicle-to-goal distance $0.0417$~m, beacon-position error $0.0032$~m, and beacon-yaw error $0.00087$~rad; by packet 120 the corresponding errors are $0.0026$, $0.0023$, $0.0043$~m, and $0.00119$~rad.
Figure~\ref{fig:closedloop_err} plots these error traces alongside the accumulated spread $S_v$: the $S_v$ trace is monotone increasing while the excitation envelope is active and levels off once target seeking dominates, illustrating the supervision rule of Algorithm~\ref{alg:supervised} rather than a fixed exploration schedule.

\begin{figure}[t]
\centering
\begin{tikzpicture}
\begin{semilogyaxis}[
    scale only axis,
    width=0.72\columnwidth, height=0.58\columnwidth,
    xlabel={measurement step $k$}, ylabel={error},
    xmin=0, xmax=60, ymin=1e-4, ymax=30,
    xtick={0,15,30,45,60},
    grid=major, grid style={black!10},
    tick label style={font=\footnotesize},
    label style={font=\footnotesize},
    legend columns=3,
    legend cell align=left,
    legend style={at={(0.5,1.03)}, anchor=south, draw=none, fill=none,
        font=\footnotesize, column sep=4pt},
]
\addplot[figblue, thick] table[x=step, y=goal] {closed_loop_run.dat};
\addlegendentry{$\|q_k-p\|$}
\addplot[figgreen, thick] table[x=step, y=target] {closed_loop_run.dat};
\addlegendentry{$\|\hat p_k-p\|$}
\addplot[figpurple, thick] table[x=step, y=bpos] {closed_loop_run.dat};
\addlegendentry{beacon pos.}
\addplot[figred, thick] table[x=step, y=byaw] {closed_loop_run.dat};
\addlegendentry{beacon yaw}
\addlegendimage{black, thick}
\addlegendentry{$S_v$ (right)}
\end{semilogyaxis}
\begin{axis}[
    scale only axis,
    width=0.72\columnwidth, height=0.58\columnwidth,
    axis y line*=right, axis x line=none,
    xmin=0, xmax=60, ymin=0, ymax=170,
    ylabel={$S_v$},
    ytick={0,50,100,150},
    tick label style={font=\footnotesize},
    label style={font=\footnotesize},
]
\addplot[black, thick] table[x=step, y=spread] {closed_loop_run.dat};
\end{axis}
\end{tikzpicture}
\caption{Target-seeking error $\|q_k-p\|$, target-estimate error $\|\hat p_k-p\|$, and beacon position (m) and yaw (rad) errors on the left log axis, with the accumulated spread $S_v$ on the right linear axis, over the flagship run's first 60 packets.}
\label{fig:closedloop_err}
\end{figure}

\subsection{ROS~2/Gazebo Software-in-the-Loop Experiment}
\label{sec:gazebo}

The same supervised loop also runs software-in-the-loop rather than inside the batch simulator.
Algorithm~\ref{alg:supervised} executes as a ROS~2 node that publishes planar velocity commands over \texttt{ros\_gz\_bridge} to a vehicle model whose motion is integrated by the Gazebo physics engine under zero-order-hold actuation (a velocity-control system plugin applies the held command each physics step; gravity and collisions are disabled on the vehicle link so the plant is a clean kinematic integrator, matching the paper's single-integrator vehicle model).
The estimator consumes the vehicle poses Gazebo publishes back as odometry, range-bearing packets emulated at those poses are delivered to the estimator one control cycle (80~ms) late as an explicit sensing delay, and the node paces its 80~ms packet cadence on the bridged simulation clock, so the experiment timeline is simulation time.
Estimation, the spread certificate $S_v$, and the supervision rule are byte-for-byte the same library code the batch simulator links; what changes is the plant: real message transport, actuation hold, and sensing latency replace the idealized explicit-Euler kinematics of \eqref{eq:controller} used for Figure~\ref{fig:closedloop_err}.
The node also moves translucent estimate markers in the Gazebo scene through the bridged \texttt{set\_pose} service each packet, so recordings of the run show the estimator converging live.

\begin{table}[t]
\centering
\caption{Gazebo software-in-the-loop repeatability.}
\label{tab:gazebo_batch}
\scriptsize
\setlength{\tabcolsep}{3pt}
\begin{tabular}{l c c c c c}
\hline
Run & Goal (m) & Target (m) & Bcn.\ pos.\ (m) & Bcn.\ yaw (rad) & Retrig.\\
\hline
seed 7 & 0.0026 & 0.0026 & 0.0043 & 0.0012 & 4\\
seed 8 & 0.0023 & 0.0032 & 0.0021 & 0.0001 & 4\\
seed 9 & 0.0030 & 0.0014 & 0.0030 & 0.0001 & 4\\
seed 10 & 0.0044 & 0.0015 & 0.0048 & 0.0010 & 4\\
seed 11 & 0.0080 & 0.0051 & 0.0038 & 0.0001 & 4\\
seed 12 & 0.0076 & 0.0046 & 0.0028 & 0.0008 & 4\\
seed 13 & 0.0016 & 0.0044 & 0.0042 & 0.0010 & 4\\
seed 14 & 0.0049 & 0.0018 & 0.0021 & 0.0009 & 4\\
seed 15 & 0.0034 & 0.0009 & 0.0018 & 0.0006 & 4\\
seed 16 & 0.0038 & 0.0024 & 0.0040 & 0.0004 & 4\\
\hline
across-seed RMSE & 0.0046 & 0.0031 & 0.0034 & 0.0007 & ---\\
\hline
delay 0 (seed 7) & 0.0023 & 0.0026 & 0.0041 & 0.0011 & 2\\
delay 2 (seed 7) & 0.0028 & 0.0025 & 0.0045 & 0.0013 & 5\\
\hline
\end{tabular}\\[3pt]
\parbox{\columnwidth}{\raggedright\scriptsize Note: final errors after 120 packets for ten seeds at one-packet delay, plus delay variants at seed 7. Delay-1 runs initialize at packet 3 and retrigger packets 1--4; delay 0 initializes at packet 2 and matches the simulator's packets-1--2 pattern.}
\end{table}

\begin{figure}[t]
\centering
\begin{tikzpicture}
\begin{semilogyaxis}[
    scale only axis,
    width=0.78\columnwidth, height=0.5\columnwidth,
    xlabel={packet $k$}, ylabel={$S_v$},
    xmin=0, xmax=12, ymin=0.02, ymax=30,
    xtick={0,2,4,6,8,10,12},
    grid=major, grid style={black!10},
    tick label style={font=\footnotesize},
    label style={font=\footnotesize},
    legend cell align=left,
    legend pos=north west,
    legend style={draw=black!20, fill=white, font=\footnotesize},
]
\addplot[figblue, thick] table[x=step, y=sim] {spread_delay_zoom.dat};
\addlegendentry{simulator}
\addplot[figgreen, thick] table[x=step, y=d0] {spread_delay_zoom.dat};
\addlegendentry{Gazebo, delay 0}
\addplot[figpurple, thick] table[x=step, y=d1] {spread_delay_zoom.dat};
\addlegendentry{Gazebo, delay 1}
\addplot[figred, thick] table[x=step, y=d2] {spread_delay_zoom.dat};
\addlegendentry{Gazebo, delay 2}
\addplot[black, dashed, thick] coordinates {(0,0.16) (12,0.16)};
\addlegendentry{$\bar S=0.16$}
\end{semilogyaxis}
\end{tikzpicture}
\caption{Sensing delay shifts certification: the simulator and delay-0 Gazebo run cross $\bar S=0.16$ at packet 3, while delays of one and two packets cross at packets 5 and 6. Delay postpones the two-view seed and spread certificate without changing the final state.}
\label{fig:spread_delay}
\end{figure}

Table~\ref{tab:gazebo_batch} and Figure~\ref{fig:spread_delay} quantify the run and its repeatability.
The delay shifts when the certificate clears but not where the loop ends: delay 0 reproduces the simulator's packet-3 crossing, while one- and two-packet delays move it to packets 5 and 6 (Figure~\ref{fig:spread_delay}), one packet later than without the projection because the delayed window certifies on clipped-transit poses.
The flagship Gazebo run ends at $2.6$~mm goal distance, $2.6$~mm target error, $4.3$~mm beacon-position error, and $1.2$~mrad yaw error; across ten seeds the corresponding RMSEs are $4.6$, $3.1$, $3.4$~mm, and $0.7$~mrad.
Before the two-view seed becomes available the node leaves $\hat p$ at its prior, so sensing delay creates a transparent initialization plateau rather than a gauge-dependent control transient.

\subsection{Baseline Comparison Against a Fixed Excitation Schedule}
\label{sec:baseline}

\begin{figure*}[t]
\centering
\begin{tikzpicture}
\begin{axis}[
    scale only axis,
    width=0.23\textwidth,
    axis equal image,
    xlabel={$x$ (m)}, ylabel={$y$ (m)},
    grid=major, grid style={black!10},
    tick label style={font=\footnotesize},
    label style={font=\footnotesize},
    xmin=1.12, xmax=1.44, ymin=-0.82, ymax=-0.54,
    legend cell align=left,
    legend columns=2,
    legend style={at={(0.5,1.03)}, anchor=south, draw=none, fill=none,
        font=\footnotesize, column sep=5pt},
]
\addplot[figpurple, thick] table[x=x, y=y] {notransient_fixed.dat};
\addlegendentry{fixed schedule}
\addplot[figblue, thick] table[x=x, y=y] {notransient_supervised.dat};
\addlegendentry{supervised}
\addplot[only marks, mark=*, mark size=2.2pt, figred] coordinates {(1.2,-0.75)};
\node[font=\footnotesize, figred, anchor=east] at (axis cs:1.19,-0.75) {$p$};
\end{axis}
\end{tikzpicture}\hfill
\begin{tikzpicture}
\begin{semilogyaxis}[
    scale only axis,
    width=0.44\textwidth, height=0.20\textwidth,
    xlabel={packet $k$}, ylabel={beacon-yaw error (rad)},
    xmin=0, xmax=120, ymin=5e-5, ymax=3,
    grid=major, grid style={black!10},
    tick label style={font=\footnotesize},
    label style={font=\footnotesize},
    legend cell align=left,
    legend columns=4,
    legend style={at={(0.5,1.03)}, anchor=south, draw=none, fill=none,
        font=\footnotesize, column sep=5pt},
]
\addplot[figpurple, thick] table[x=step, y=byaw] {notransient_fixed.dat};
\addlegendentry{fixed schedule}
\addplot[figgreen, thick] table[x=step, y=byaw] {notransient_information.dat};
\addlegendentry{information-gradient}
\addplot[figblue, thick] table[x=step, y=byaw] {notransient_supervised.dat};
\addlegendentry{supervised}
\addplot[figred, dashed, semithick] coordinates {(0,0.05) (120,0.05)};
\addlegendentry{$\epsilon_\psi=0.05$}
\addplot[black!50, dashed, semithick] coordinates {(32,5e-5) (32,3)};
\node[font=\footnotesize, black!60, anchor=west, rotate=90] at (axis cs:32,3e-3) {$S_v\ge\bar S$};
\end{semilogyaxis}
\end{tikzpicture}
\caption{The hardest tested regime (no transient, fastest decay $\lambda=2$~s$^{-1}$), one paired run: the vehicle starts at the true target with a perfect prior, so only excitation moves it. Left: the paths. The fixed schedule's excitation dies before certifying, while supervision holds its loop until $S_v\ge\bar S$ (the information-gradient path coincides with the fixed one). Right: the yaw error. The unsupervised schedules plateau at $0.093$ and $0.109$~rad, while supervision certifies at packet 32 and ends at $0.0070$~rad, a $13\times$ margin at identical task error. Population statistics: Table~\ref{tab:lambda_sweep}.}
\label{fig:showcase}
\end{figure*}

\begin{figure*}[t]
\centering
\begin{tikzpicture}
\begin{axis}[
    scale only axis,
    width=0.23\textwidth,
    axis equal image,
    xlabel={$x$ (m)}, ylabel={$y$ (m)},
    grid=major, grid style={black!10},
    tick label style={font=\footnotesize},
    label style={font=\footnotesize},
    xmin=-1.0, xmax=3.4, ymin=-2.7, ymax=1.2,
]
\addplot[figblue, semithick] table[x=x, y=y] {seek_ring_0.dat};
\addplot[figgreen, semithick] table[x=x, y=y] {seek_ring_1.dat};
\addplot[figpurple, semithick] table[x=x, y=y] {seek_ring_2.dat};
\addplot[black!60, semithick] table[x=x, y=y] {seek_ring_3.dat};
\addplot[figblue!50!figgreen, semithick] table[x=x, y=y] {seek_ring_4.dat};
\addplot[figpurple!60!black, semithick] table[x=x, y=y] {seek_ring_5.dat};
\addplot[only marks, mark=o, mark size=1.8pt, black] coordinates
    {(3.0,-0.75) (2.1,0.809) (0.3,0.809) (-0.6,-0.75) (0.3,-2.309) (2.1,-2.309)};
\addplot[only marks, mark=*, mark size=2.4pt, figred] coordinates {(1.2,-0.75)};
\node[font=\footnotesize, figred, anchor=south west] at (axis cs:1.32,-0.70) {$p$};
\end{axis}
\end{tikzpicture}\hfill
\begin{tikzpicture}
\begin{axis}[
    scale only axis,
    width=0.44\textwidth, height=0.20\textwidth,
    xlabel={packet $k$}, ylabel={$S_v$},
    xmin=0, xmax=120, ymin=0, ymax=0.85,
    grid=major, grid style={black!10},
    tick label style={font=\footnotesize},
    label style={font=\footnotesize},
    legend cell align=left,
    legend columns=4,
    legend style={at={(0.5,1.03)}, anchor=south, draw=none, fill=none,
        font=\footnotesize, column sep=5pt},
]
\addplot[figpurple, thick] table[x=step, y=spread] {notransient_fixed.dat};
\addlegendentry{fixed schedule}
\addplot[figgreen, thick] table[x=step, y=spread] {notransient_information.dat};
\addlegendentry{information-gradient}
\addplot[figblue, thick] table[x=step, y=spread] {notransient_supervised.dat};
\addlegendentry{supervised}
\addplot[figred, dashed, semithick] coordinates {(0,0.16) (120,0.16)};
\addlegendentry{$\bar S=0.16$}
\end{axis}
\end{tikzpicture}
\caption{Variety and mechanism behind Figure~\ref{fig:showcase}. Left: six supervised target-seeking runs from a ring of starts (radius $1.8$~m, distinct seeds, common wrong prior); every run certifies by packet ten and ends below $0.0045$~rad yaw error, so certification is not tied to a chosen start. Right: the spread certificate of the paired no-transient run; the fixed and information-gradient schedules stall below $\bar S=0.16$, while the supervised policy holds its excitation loop until the certificate clears at packet 32.}
\label{fig:variety}
\end{figure*}

\begin{table*}[t]
\centering
\caption{Decay-rate sweep.}
\label{tab:lambda_sweep}
\small
\resizebox{\textwidth}{!}{%
\begin{tabular}{c c c c c c c}
\hline
& \multicolumn{3}{c}{Fixed circular} & \multicolumn{3}{c}{Supervised (Alg.~\ref{alg:supervised})}\\
\cline{2-4}\cline{5-7}
Decay rate $\lambda$ & Bcn.\ pos.\ RMSE (m) & Bcn.\ yaw RMSE (rad) & Yaw succ. & Bcn.\ pos.\ RMSE (m) & Bcn.\ yaw RMSE (rad) & Yaw succ.\\
\hline
0.02 & 0.034 [0.030, 0.039] & 0.010 [0.009, 0.011] & 1.00 & 0.033 [0.028, 0.037] & 0.009 [0.008, 0.011] & 1.00\\
0.05 & 0.039 [0.034, 0.044] & 0.011 [0.010, 0.013] & 1.00 & 0.035 [0.031, 0.039] & 0.010 [0.009, 0.011] & 1.00\\
0.10 & 0.048 [0.042, 0.053] & 0.014 [0.012, 0.015] & 1.00 & 0.039 [0.034, 0.043] & 0.011 [0.010, 0.013] & 1.00\\
0.25 & 0.072 [0.063, 0.081] & 0.021 [0.018, 0.023] & 0.99 & 0.049 [0.043, 0.054] & 0.014 [0.012, 0.016] & 1.00\\
0.50 & 0.105 [0.091, 0.118] & 0.030 [0.026, 0.034] & 0.90 & 0.058 [0.051, 0.064] & 0.016 [0.014, 0.018] & 1.00\\
1.00 & 0.150 [0.130, 0.169] & 0.043 [0.037, 0.049] & 0.73 & 0.064 [0.056, 0.071] & 0.018 [0.016, 0.020] & 1.00\\
2.00 & 0.226 [0.195, 0.253] & 0.065 [0.056, 0.073] & 0.56 & 0.067 [0.058, 0.075] & 0.019 [0.017, 0.021] & 1.00\\
\hline
\end{tabular}}\\[3pt]
\parbox{\textwidth}{\raggedright\scriptsize Note: $\lambda$ is in s$^{-1}$; no-transient scenario, $M=100$ paired trials. Entries are final-error RMSEs after 120 packets with percentile-bootstrap 95\% CIs; yaw success uses the fixed 0.05-rad criterion. Target RMSE remains $3.4$--$3.6$~mm for both policies; supervision retriggers $30.8$--$31.7$ packets on average across the sweep.}
\end{table*}

Algorithm~\ref{alg:supervised} differs from the fixed decaying-swirl envelope used above in two ways: the fixed schedule sets $t_0=0$ once and never revisits it, while the supervised controller resets $t_0$ whenever $S_v(\mathcal{W})<\bar S$; and while the certificate remains below threshold, the supervised controller applies the projected seeking law \eqref{eq:projection}, so hypothesis (ii) of Proposition~\ref{prop:acquisition} holds by construction rather than by assumption.
The two are compared directly, holding the estimator, noise, initial condition, and integration step fixed.
In the nominal scenario of Figure~\ref{fig:closedloop_err}, where the vehicle starts $5.372$~m from the target and the transient supplies substantial spread, the two controllers remain close: final beacon-position and yaw errors are $0.0048$~m and $0.0014$~rad for the fixed schedule against $0.0054$~m and $0.0016$~rad for the supervised loop, whose two resets clip only the first two packets of the transit.
Supervision costs almost nothing when the fixed schedule happens to be adequate.

Figure~\ref{fig:showcase} makes the hardest regime the sweep contains concrete (no convergence transient, fastest tested decay) with three policies consuming identical packets: the fixed decaying swirl, an information-gradient schedule that ascends the log-determinant of the calibration information but shares the same decaying envelope, and the supervised policy.
Both unsupervised schedules let excitation die before the certificate clears and plateau at $0.093$ and $0.109$~rad yaw error ($0.32$ and $0.38$~m relay position); the supervised policy holds its excitation loop at full amplitude until $S_v\ge\bar S$ at packet 32 and ends at $0.0070$~rad and $0.0240$~m, at the same $2$--$3$~mm task error.
The failure is thus a property of unsupervised scheduling as such, not of the excitation shape: even an information-seeking law fails once its envelope, rather than a measured certificate, decides when exploration stops.
Figure~\ref{fig:variety} adds the certificate trace behind that run and the complementary variety result: six supervised seeking runs from a ring of start positions all certify en route within ten packets, so the near-straight transits of the nominal scenario are the incidental-excitation mechanism of Section~\ref{sec:seeking} at work rather than a curated geometry.

To test robustness to schedule mismatch rather than a single hand-selected failure case, the decay rate $\lambda$ is swept over $\{0.02,0.05,0.10,0.25,0.50,1.0,2.0\}$ in a scenario with no convergence transient: the vehicle is initialized at the true target, so only the excitation schedule itself can excite the geometry.
Table~\ref{tab:lambda_sweep} reports across-trial RMSE for both policies over $M=100$ paired trials of the same $120$-packet run.

The sweep confirms the expected asymmetry at the population level.
The fixed schedule degrades monotonically as its decay outruns the time-to-adequate-excitation: beacon-position RMSE grows from $0.034$~m at $\lambda=0.02$ to $0.226$~m at $\lambda=2$, yaw RMSE from $0.010$ to $0.065$~rad, and success falls from $100\%$ to $56\%$, while target RMSE stays near $3.5$~mm.
Calibration failure is silent in the tracking metric, exactly the split Corollary~\ref{cor:variance} predicts: the target rows carry no information on the calibration state $\zeta$ once $p$ absorbs their variation, so a poorly excited window still fits $p$ passably while leaving $x$ and $\psi$ poorly identified.
The supervised controller also degrades mildly with faster post-certificate decay, but remains inside the design target: beacon-position RMSE is $0.033$--$0.067$~m, yaw RMSE $0.0095$--$0.0191$~rad, and all trials satisfy the yaw criterion.

Two features of Table~\ref{tab:lambda_sweep} tie back to the theory.
First, the supervised controller's excitation acquisition is nearly independent of the mismatch parameter: it retriggers for $30.8$--$31.7$ packets on average across the sweep, since each reset holds the envelope at full amplitude while the window is underexcited; the residual $\lambda$-dependence enters only through the projection level $Ae^{-\lambda\bar T}/\pi$ in \eqref{eq:projection}, consistent with the acquisition mechanism of Proposition~\ref{prop:acquisition}.
Second, $\bar S=0.16$ targets a population yaw RMSE of $0.05$~rad and the supervised policy remains below it at every $\lambda$ (at most $0.0192$~rad); the fixed schedule exceeds the target only at the fastest decay, exposing a graded rather than manufactured failure boundary.
Algorithm~\ref{alg:supervised} does not need to know the time-to-adequate-excitation in advance; it measures $S_v$ directly and keeps resetting until the certificate clears, at the excitation cost quantified next.

\subsection{Spread-Threshold Ablation}
\label{sec:ablation}

\begin{table*}[t]
\centering
\caption{Spread-threshold ablation.}
\label{tab:ablation}
\small
\resizebox{\textwidth}{!}{%
\begin{tabular}{c c c c c c c c c}
\hline
$\bar S$ & Pred.\ yaw (rad) & Yaw RMSE (rad) & Yaw succ. & Pkts.\ to $\bar S$ & Path (m) & Effort & Target RMSE (m) & Bcn.\ pos.\ RMSE (m)\\
\hline
0.05 & 0.089 & 0.0239 [0.0204, 0.0272] & 0.96 & 18.3 & 0.373 & 0.100 & 0.00309 [0.00278, 0.00337] & 0.0835 [0.0713, 0.0950]\\
0.16 & 0.050 & 0.0187 [0.0157, 0.0217] & 0.99 & 31.8 & 0.557 & 0.167 & 0.00318 [0.00285, 0.00349] & 0.0659 [0.0551, 0.0761]\\
1.0 & 0.020 & 0.0125 [0.0107, 0.0144] & 1.00 & 49.9 & 1.039 & 0.258 & 0.00332 [0.00296, 0.00368] & 0.0443 [0.0377, 0.0512]\\
9.04 & 0.0067 & 0.0053 [0.0046, 0.0059] & 1.00 & 109.2 & 1.768 & 0.554 & 0.00327 [0.00292, 0.00361] & 0.0186 [0.0162, 0.0209]\\
\hline
\end{tabular}}\\[3pt]
\parbox{\textwidth}{\raggedright\scriptsize Note: $M=100$ supervised trials, no-transient scenario, $\lambda=2$~s$^{-1}$; success is fixed at 0.05~rad. Pred.\ yaw is $\sigma_{\rm eff}/\sqrt{\bar S}$; brackets are percentile-bootstrap 95\% CIs; effort is $\sum_k\|u_k^{\rm exp}\|^2\Delta t$. Every threshold is reached within the 120-packet horizon.}
\end{table*}

Table~\ref{tab:ablation} sweeps the supervisor's only tuning constant over $\bar S\in\{0.05,0.16,1.0,9.04\}$ in the same no-transient scenario (fast decay $\lambda=2$, supervised policy only, $M=100$ trials), holding the trial success criterion fixed at 0.05~rad so that success means the same accuracy at every threshold.
At every threshold the measured yaw RMSE is below the native design value: $0.0239<0.089$, $0.0187<0.050$, $0.0125<0.020$, and $0.0053<0.0067$~rad for $\bar S=0.05,0.16,1.0,9.04$.
Accuracy is bought with excitation: increasing $\bar S$ from $0.05$ to $0.16$ raises mean certification time from $18.3$ to $31.8$ packets and effort from $0.100$ to $0.167$, while beacon-position RMSE improves from $0.0835$ to $0.0659$~m; target RMSE remains near $3$~mm.
The $\bar S=9.04$ row records a side effect of the projection in \eqref{eq:projection}: with the seeking pull clipped, the underexcited orbit opens rather than being dragged closed, and a threshold that was unreachable in 120 packets without the projection now certifies at packet 109 and delivers its predicted level.
A threshold remains meaningful only up to the spread reachable at the loop's excitation radius and packet budget, which is exactly what Corollary~\ref{cor:budget} prices: $\bar S=9.04$ is the high-precision operating point of Table~\ref{tab:budget}, here reached only at the edge of the horizon.
The success column also illustrates the population-versus-trial distinction: at $\bar S=0.16$ the population RMSE is $0.0187$~rad and $99\%$ of individual trials meet the separate 0.05-rad criterion.

\subsection{Target Seeking with Incidental Excitation}
\label{sec:seeking}

The sweep above isolates calibration by starting the vehicle at the true target, so that only excitation moves it.
The complementary scenario starts the vehicle at its wrong initial target estimate, $1.4$~m from the true target, with the fast decay $\lambda=2$: the seeking term $-k(q-\hat p)$ is initially quiescent (the vehicle sits at $\hat p$), and target-seeking success depends on the loop generating excitation and calibrating en route.

\begin{table}[t]
\centering
\caption{Target-seeking comparison.}
\label{tab:seeking}
\scriptsize
\setlength{\tabcolsep}{3pt}
\resizebox{\columnwidth}{!}{%
\begin{tabular}{l c c c c c}
\hline
Policy & Pkts.\ to goal & Goal RMSE (m) & Target RMSE (m) & Bcn.\ pos.\ RMSE (m) & Retrig.\\
\hline
Fixed & $34.3\pm0.3$ & 0.00312 [0.00285, 0.00338] & 0.00309 [0.00281, 0.00334] & 0.0159 [0.0134, 0.0183] & 0\\
Supervised & $36.1\pm0.3$ & 0.00312 [0.00285, 0.00339] & 0.00308 [0.00282, 0.00335] & 0.0159 [0.0136, 0.0181] & 5.5\\
\hline
\end{tabular}}\\[3pt]
\parbox{\columnwidth}{\raggedright\scriptsize Note: $M=100$ paired trials, $\lambda=2$~s$^{-1}$; the vehicle starts at its wrong initial target estimate, $1.4$~m from the true target. Pkts.\ to goal is mean $\pm$ 95\% CI over the trials that crossed the goal threshold (all did, under both policies); brackets are bootstrap 95\% CIs.}
\end{table}

Over $M=100$ paired trials both policies succeed in every trial with statistically indistinguishable final accuracy; yaw RMSE is $0.00472$ [$0.00398$, $0.00543$]~rad (fixed) and $0.00467$ [$0.00398$, $0.00531$]~rad (supervised).
This is the same self-excitation mechanism that makes the nominal scenario of Section~\ref{sec:flagship} easy: any trajectory with a substantial transit accumulates $S_v$ incidentally, and the certificate clears without dedicated exploration.
Ordinary closed-loop motion can deliver excitation naturally; supervision matters most when that incidental excitation is absent or insufficient, as in the no-transient sweep.
Here the measured price of supervision is arrival time, not accuracy: the fixed schedule reaches the goal in $34.3\pm0.3$ packets against $36.1\pm0.3$ supervised, a $5\%$ ($0.14$~s) cost from projecting the seeking input over its $5.5$ retriggered packets while the certificate is below threshold.

\subsection{Cross-Study Interpretation}

The three closed-loop studies isolate different causal roles of motion.
The no-transient decay sweep removes incidental seeking motion and therefore measures whether the excitation schedule alone can calibrate the relay.
The seeking study restores a large state transient and shows that ordinary regulation can supply the missing spread without a performance difference between policies.
The Gazebo delay study changes message timing while preserving the commanded geometry: matched seed-7 runs isolate the certification shift across zero-, one-, and two-packet delays, while a separate ten-seed experiment tests repeatability at a fixed one-packet delay.
Together these results delimit the claim: supervision is an online adequacy mechanism, not a universal source of lower tracking error when the nominal trajectory is already informative.

The experiments also expose why calibration must be evaluated separately from task completion.
In the no-transient sweep, target RMSE remains near $3.5$~mm while relay-position and yaw errors deteriorate sharply under the fixed schedule; a controller judged only by target or goal error would miss this loss of frame calibration.
Conversely, the supervised policy keeps the observable certificate above its design level without inspecting ground-truth estimation error.
This distinction is operationally important when the calibrated relay frame will later be reused for another target, shared with another vehicle, or carried into a downstream planner: task success in one episode does not certify that the recovered frame is trustworthy.

\section{Finite-Memory Deployment}

Algorithm~\ref{alg:supervised} uses a growing packet window, which makes $S_v$ nondecreasing and is the reason Proposition~\ref{prop:acquisition} can conclude that a cleared threshold remains cleared.
A bounded-memory implementation instead uses the most recent $L$ packets,
\[
S_v^{(L)}(k)=\sum_{j=k-L+1}^{k}\|q_j-\bar q_k^{(L)}\|^2,
\]
and must continue monitoring the certificate because informative poses can leave the window.

\begin{corollary}[Rolling-window calibration bound]
\label{cor:rolling}
Suppose every active $L$-packet window has nonzero relay ranges bounded by $r_{\max}$ and satisfies $S_v^{(L)}(k)\ge\bar S>0$.
Under the local native range-bearing model, the yaw information after eliminating relay translation is at least $\bar S/\sigma_{\rm eff}^2$, and therefore
\begin{equation}
\operatorname{var}(\hat\psi_k)\le\frac{\sigma_{\rm eff}^2}{\bar S},\qquad
\sigma_{\rm eff}=\max\{\sigma_r,r_{\max}\sigma_\theta\}. \label{eq:rollingbound}
\end{equation}
Thus the same accuracy-driven threshold used by the growing-window supervisor remains valid for every bounded-memory window that maintains the certificate.
\end{corollary}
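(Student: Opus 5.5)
The plan is to reduce the rolling-window statement to the linearized Fisher-information computation of Corollary~\ref{cor:variance}, applied to the $L$ packets of the active window. The only change from that corollary is that each vehicle packet now carries the anisotropic native range-bearing information rather than isotropic information.

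First, linearize the vehicle rows of \eqref{eq:model} at the true state. A perturbation $\zeta=(\delta x,\delta\psi)$ moves the predicted local vehicle vector by $-R^T(\delta x+RS\ell_j^v\delta\psi)$. The native residuals $(\tilde r_j^v,\tilde\beta_j^v)$ map to local-vector coordinates through the polar Jacobian, so each packet's local-vector information is
\[
\Lambda_j=\sigma_r^{-2}e_re_r^T+(r_j^v\sigma_\theta)^{-2}e_\theta e_\theta^T .
\]
Because $0<r_j^v\le r_{\max}$, both eigenvalues are at least $\sigma_{\rm eff}^{-2}$, hence $\Lambda_j\succeq\sigma_{\rm eff}^{-2}I$. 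This is exactly the operating-set bound already quoted before \eqref{eq:nativethreshold}.

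Second, the target rows are handled as in the proof of Corollary~\ref{cor:variance}. Eliminating the free target $p$ removes all of their information on $\zeta$, so only the vehicle rows matter. The anisotropic weights do not change this step: the target-row information still has the form $\begin{bmatrix}M&-MB\\-B^TM&B^TMB\end{bmatrix}$, and its Schur complement in $p$ is again zero.

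Third, compare the window's vehicle information on $\zeta$, $\sum_jG_j^T\Lambda_jG_j$, with its isotropic version in the Loewner order. The Loewner bound from step one gives
\[
\sum_jG_j^T\Lambda_jG_j\succeq\sigma_{\rm eff}^{-2}\sum_jG_j^TG_j .
\]
Schur complementation onto the $\psi$ block is monotone in the Loewner order, since it is the reciprocal of the $(\psi,\psi)$ entry of the inverse and inversion reverses order. The isotropic Schur complement over the $L$ packets is
\[
\min_{\delta x}\sum_j\|\delta x+RS\ell_j^v\|^2=S_v^{(L)}(k)
\]
by Lemma~\ref{lem:spread} applied to the window, and in the noiseless model $S_v^{(L)}(k)$ equals the pose spread appearing in the definition of $S_v^{(L)}$. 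Combining these, the yaw information after eliminating translation is at least $S_v^{(L)}(k)/\sigma_{\rm eff}^2\ge\bar S/\sigma_{\rm eff}^2$. Inverting gives the linearized variance bound \eqref{eq:rollingbound}, because $\mathrm{var}(\hat\psi_k)$ is the inverse of the yaw Schur complement with $p$ and $x$ marginalized. The final sentence of the statement then follows, since the bound depends only on the certificate level of the current window and not on its history.

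The main obstacle I expect is the monotonicity step. One must check that eliminating $p$ under anisotropic target-row weights still leaves zero information on $\zeta$. One must also check that the Schur complement over $x$ of a Loewner-larger matrix is Loewner-larger. I will cite the standard variational characterization $\min_{\delta x}[\delta x;1]^TF[\delta x;1]$, which is monotone in $F$ pointwise before minimizing, rather than reprove it.
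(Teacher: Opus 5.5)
Your proposal is correct and follows the paper's own proof. Both arguments bound each packet's radial and tangential local-vector information eigenvalues, $\sigma_r^{-2}$ and $(r_j^v\sigma_\theta)^{-2}$, below by $\sigma_{\rm eff}^{-2}$; eliminating translation then recovers the centered spread of Lemma~\ref{lem:spread}, so the yaw Schur complement is at least $S_v^{(L)}(k)/\sigma_{\rm eff}^2\ge\bar S/\sigma_{\rm eff}^2$, and inverting gives the variance bound. Your two added checks, the Loewner monotonicity of the Schur complement and the fact that anisotropic target-row weights still leave zero information on $\zeta$ once $p$ is eliminated, make rigorous steps that the paper's terse proof leaves implicit.
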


\begin{proof}
The radial and tangential local-vector information eigenvalues are $\sigma_r^{-2}$ and $(r_k^v\sigma_\theta)^{-2}$.
The range bound makes each eigenvalue at least $\sigma_{\rm eff}^{-2}$.
Eliminating translation leaves the centered yaw energy of Lemma~\ref{lem:spread}, so the yaw Schur complement is at least $S_v^{(L)}(k)/\sigma_{\rm eff}^2\ge\bar S/\sigma_{\rm eff}^2$; inversion gives \eqref{eq:rollingbound}.
\end{proof}

A rolling-window supervisor can therefore preserve the same accuracy design rule by retriggering whenever the lower threshold is violated.
Using release and retrigger levels $\bar S_{\rm hi}>\bar S_{\rm lo}>0$ supplies hysteresis: exploration is released only above $\bar S_{\rm hi}$ and is not re-entered until the rolling certificate falls below $\bar S_{\rm lo}$.
The finite-acquisition proof applies to each re-entry episode while its assumptions hold, since each re-entry re-engages the projected seeking law \eqref{eq:projection} and hypothesis (ii) again holds by construction, but the growing-window conclusion of finitely many total resets no longer follows automatically; proving dwell time or eventual cessation for a sliding window requires an additional condition on the seeking trajectory.

\section{Statistical Validation Protocol}

The batch simulator and ROS~2 node link the same C++ estimator library, use native range-bearing residuals and analytic Jacobians, retain the prior before two-view initialization, and evaluate $t_k=(k-1)\Delta t$ with $\Delta t=0.08$~s.
The decay and seeking comparisons use $M=100$ paired trials: each fixed/supervised pair shares the world, initial condition, packet-noise realization, and integration step, so policy differences are not confounded by different random samples.
Reported intervals are percentile-bootstrap 95\% confidence intervals over trial-level terminal errors; success uses a fixed $0.05$-rad yaw criterion independent of solver termination, and threshold-ablation rows retain that criterion while varying only $\bar S$.
The Gazebo study separates transport effects from estimator changes: ten seeds use one-packet sensing delay, and matched seed-7 runs use delays of zero and two packets, while the estimator, controller gains, packet cadence, and noise model remain fixed.
The batch study therefore tests the statistical effect of supervision and threshold choice, whereas the software-in-the-loop study tests whether the same logic survives physics integration, zero-order-hold actuation, and delayed message delivery; neither is presented as physical-robot evidence.
The accompanying artifact supplies the deterministic experiment definitions, shared estimator source, trial-level logs, regression tests, and figure-generation workflow.
The identifiability analysis of~\cite{bagla2027identifiability} contains the gauge, constructive recovery, conditioning, and estimator theory; the present paper begins from that result and studies how a controller acquires and maintains the required finite-window excitation.

\section{Conclusion}

Building on the identifiability result of~\cite{bagla2027identifiability}, this paper showed that the same trajectory-spread margin that removes the self-calibration gauge also bounds constructive-seed error, decomposes target-estimate variance, and gives an explicit excitation budget for circular motion, with the spread threshold selected from a desired calibration-accuracy level rather than chosen heuristically.
An excitation-supervised algorithm uses these facts to retrigger exploration only when the stored window's spread certificate is below threshold, projecting the seeking input away from the excitation's push direction while it does so; the supervision rule provably acquires any required excitation within an explicit finite time, with every hypothesis of that guarantee satisfied by the controller as implemented, while, in the noiseless model with $\lambda>0$, local estimator convergence after certification yields target-seeking convergence.
A 100-trial paired decay-rate sweep showed the fixed schedule's calibration degrading monotonically, and silently in the target metric, as its decay outruns the unknown time-to-adequate-excitation, while the supervised controller remained inside its yaw-accuracy target at every decay rate; a threshold ablation validated the native range-bearing design rule at every tested threshold and priced tighter thresholds in excitation effort; and a target-seeking scenario showed when ordinary motion supplies sufficient incidental excitation, at a measured $5\%$ arrival cost.
A ROS~2/Gazebo software-in-the-loop experiment ran the supervised loop against a physics-integrated vehicle with real message transport and sensing delay: matched seed-7 runs shifted certification from packet 3 without delay to packet 6 with a two-packet delay while retaining similar final errors, and ten one-packet-delay seeds retained millimeter-scale position accuracy and sub-milliradian yaw RMSE.
Future work includes a three-dimensional extension of the relay geometry, a greedy $S_v$-maximizing planner in place of the fixed circular excitation envelope, and physical robot validation of the supervision rule.

\bibliographystyle{IEEEtran}
\bibliography{references}

\begin{thebibliography}{10}
\providecommand{\url}[1]{#1}
\csname url@samestyle\endcsname
\providecommand{\newblock}{\relax}
\providecommand{\bibinfo}[2]{#2}
\providecommand{\BIBentrySTDinterwordspacing}{\spaceskip=0pt\relax}
\providecommand{\BIBentryALTinterwordstretchfactor}{4}
\providecommand{\BIBentryALTinterwordspacing}{\spaceskip=\fontdimen2\font plus
\BIBentryALTinterwordstretchfactor\fontdimen3\font minus
  \fontdimen4\font\relax}
\providecommand{\BIBforeignlanguage}[2]{{%
\expandafter\ifx\csname l@#1\endcsname\relax
\typeout{** WARNING: IEEEtran.bst: No hyphenation pattern has been}%
\typeout{** loaded for the language `#1'. Using the pattern for}%
\typeout{** the default language instead.}%
\else
\language=\csname l@#1\endcsname
\fi
#2}}
\providecommand{\BIBdecl}{\relax}
\BIBdecl

\bibitem{bagla2027identifiability}
Y.~Bagla, ``Trajectory-induced self-calibration for hidden-target localization
  through an unknown-pose range-bearing relay,'' 2026, arXiv:2608.09464.

\bibitem{boyd1986necessary}
S.~Boyd and S.~Sastry, ``Necessary and sufficient conditions for parameter
  convergence in adaptive control,'' \emph{Automatica}, vol.~22, no.~6, pp.
  629--639, 1986.

\bibitem{narendra1987persistent}
K.~S. Narendra and A.~M. Annaswamy, ``Persistent excitation in adaptive
  systems,'' \emph{Int. J. Control}, vol.~45, no.~1, pp. 127--160, 1987.

\bibitem{aranovskiy2017drem}
S.~Aranovskiy, A.~Bobtsov, R.~Ortega, and A.~Pyrkin, ``Performance enhancement
  of parameter estimators via dynamic regressor extension and mixing,''
  \emph{IEEE Trans. Autom. Control}, vol.~62, no.~7, pp. 3546--3550, 2017.

\bibitem{chowdhary2013concurrent}
G.~Chowdhary, T.~Yucelen, M.~M{\"u}hlegg, and E.~N. Johnson, ``Concurrent
  learning adaptive control of linear systems with exponentially convergent
  bounds,'' \emph{Int. J. Adapt. Control Signal Process.}, vol.~27, no.~4, pp.
  280--301, 2013.

\bibitem{krstic2000stability}
M.~Krsti{\'c} and H.-H. Wang, ``Stability of extremum seeking feedback for
  general nonlinear dynamic systems,'' \emph{Automatica}, vol.~36, no.~4, pp.
  595--601, 2000.

\bibitem{guler2017adaptive}
S.~G{\"u}ler, B.~Fidan, S.~Dasgupta, B.~D.~O. Anderson, and I.~Shames,
  ``Adaptive source localization based station keeping of autonomous
  vehicles,'' \emph{IEEE Trans. Autom. Control}, vol.~62, no.~7, pp.
  3122--3135, 2017.

\bibitem{hausman2017observability}
K.~Hausman, J.~A. Preiss, G.~S. Sukhatme, and S.~Weiss, ``Observability-aware
  trajectory optimization for self-calibration with application to {UAVs},''
  \emph{IEEE Robot. Autom. Lett.}, vol.~2, no.~3, pp. 1770--1777, 2017.

\bibitem{preiss2018simultaneous}
J.~A. Preiss, K.~Hausman, G.~S. Sukhatme, and S.~Weiss, ``Simultaneous
  self-calibration and navigation using trajectory optimization,'' \emph{Int.
  J. Robot. Res.}, vol.~37, no. 13--14, pp. 1573--1594, 2018.

\bibitem{peng2024trajectory}
J.~Peng, Q.~Wang, B.~Jin, Y.~Zhang, and K.~Lu, ``Trajectory optimization to
  enhance observability for bearing-only target localization and sensor bias
  calibration,'' \emph{Biomimetics}, vol.~9, no.~9, p. 510, 2024.

\bibitem{bishop2010optimality}
A.~N. Bishop, B.~Fidan, B.~D.~O. Anderson, K.~Do{\u{g}}an{\c{c}}ay, and P.~N.
  Pathirana, ``Optimality analysis of sensor-target localization geometries,''
  \emph{Automatica}, vol.~46, no.~3, pp. 479--492, 2010.

\bibitem{martinez2006optimal}
S.~Mart{\'\i}nez and F.~Bullo, ``Optimal sensor placement and motion
  coordination for target tracking,'' \emph{Automatica}, vol.~42, no.~4, pp.
  661--668, 2006.

\bibitem{zhou2011multirobot}
K.~Zhou and S.~I. Roumeliotis, ``Multirobot active target tracking with
  combinations of relative observations,'' \emph{IEEE Trans. Robot.}, vol.~27,
  no.~4, pp. 678--695, 2011.

\bibitem{bourgault2002information}
F.~Bourgault, A.~A. Makarenko, S.~B. Williams, B.~Grocholsky, and H.~F.
  Durrant-Whyte, ``Information based adaptive robotic exploration,'' in
  \emph{Proc. IEEE/RSJ Int. Conf. Intell. Robots Syst.}, 2002, pp. 540--545.

\bibitem{placed2023active}
J.~A. Placed, J.~Strader, H.~Carrillo, N.~Atanasov, V.~Indelman, L.~Carlone,
  and J.~A. Castellanos, ``A survey on active simultaneous localization and
  mapping: State of the art and new frontiers,'' \emph{IEEE Trans. Robot.},
  vol.~39, no.~3, pp. 1686--1705, 2023.

\bibitem{mesbah2018stochastic}
A.~Mesbah, ``Stochastic model predictive control with active uncertainty
  learning: {A} survey on dual control,'' \emph{Annu. Rev. Control}, vol.~45,
  pp. 107--117, 2018.

\bibitem{barshalom2001estimation}
Y.~Bar-Shalom, X.~R. Li, and T.~Kirubarajan, \emph{Estimation with Applications
  to Tracking and Navigation}.\hskip 1em plus 0.5em minus 0.4em\relax Wiley,
  2001.

\bibitem{nocedal2006numerical}
J.~Nocedal and S.~J. Wright, \emph{Numerical Optimization}, 2nd~ed.\hskip 1em
  plus 0.5em minus 0.4em\relax Springer, 2006.

\bibitem{sontag1995characterizations}
E.~D. Sontag and Y.~Wang, ``On characterizations of the input-to-state
  stability property,'' \emph{Syst. Control Lett.}, vol.~24, no.~5, pp.
  351--359, 1995.

\bibitem{khalil2002nonlinear}
H.~K. Khalil, \emph{Nonlinear Systems}, 3rd~ed.\hskip 1em plus 0.5em minus
  0.4em\relax Prentice Hall, 2002.

\bibitem{bagla2019receding}
Y.~Bagla and V.~Srivastava, ``On receding horizon chance constraint motion
  planning for uncertain multi-agent systems,'' in \emph{Proc. ASME Dyn. Syst.
  Control Conf.}, vol. 59162, 2019, p. V003T19A012.

\end{thebibliography}

\end{document}